\useunder{\uline}{\ul}{}
\newcommandx{\unsure}[2][1=]{\todo[linecolor=red,backgroundcolor=red!25,bordercolor=red,#1]{#2}}
\theoremstyle{definition}
\theoremstyle{plain}
\newtheorem{proposition}{Proposition}[section]
\newcommand{\R}[0]{\mathds{R}} % real numbers
\title{Applying Deep Learning to Calibrate Stochastic Volatility Models}
\author{
	\anchor{https://orcid.org/0000-0000-0000-0000}{\includegraphics[scale=0.06]{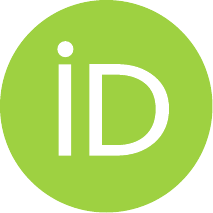}\hspace{1mm}Abir Sridi} \\
	Department of Computing \\
	Imperial College London \\
	South Kensington Campus \\
	London SW7 2AZ \\
	\texttt{a.sridi@imperial.ac.uk}
	\And
    \anchor{https://orcid.org/0000-0001-6846-6649}{\includegraphics[scale=0.06]{orcid.pdf}\hspace{1mm}Paul Bilokon} \\
	Department of Mathematics \\
	Imperial College London \\
	South Kensington Campus \\
	London SW7 2AZ \\
	\texttt{paul.bilokon@imperial.ac.uk}}
\date{Submitted: 14th September 2023, Revised: 19th September 2023}
\begin{document}

\maketitle

%%%%%%%%%%%%%%%%%%%%%%%%%%%%%%%%%%%%
\begin{abstract}

Stochastic volatility models, where the volatility is a stochastic process, can capture most of the essential stylized facts of implied volatility surfaces and give more realistic dynamics of the volatility smile/skew. However, they come with the significant issue that they take too long to calibrate. \\

Alternative calibration methods based on Deep Learning (DL) techniques have been recently used to build fast and accurate solutions to the calibration problem. Huge and Savine~\cite{differential_ML} developed a Differential Machine Learning (DML) approach, where Machine Learning models are trained on samples of not only features and labels but also differentials of labels to features. The present work aims to apply the DML technique to price vanilla European options (i.e. the calibration instruments), more specifically, puts when the underlying asset follows a Heston model and then calibrate the model on the trained network. DML allows for fast training and accurate pricing. The trained neural network dramatically reduces Heston calibration’s computation time.\\

In this work, we also introduce different regularisation techniques, and we apply them notably in the case of the DML. We compare their performance in reducing overfitting and improving the generalisation error. The DML performance is also compared to the classical DL (without differentiation) one in the case of Feed-Forward Neural Networks. We show that the DML outperforms the DL.\\

The complete code for our experiments is provided in the GitHub repository:
\begin{center}
    \url{https://github.com/asridi/DML-Calibration-Heston-Model}
\end{center}

\end{abstract}

\section{Introduction}

The development of stochastic models in exotic derivatives trading appeared with the necessity to price and hedge derivatives in financial markets as accurate as possible. The Black-Scholes model~\cite{black_scholes} is the most popular for pricing and hedging European options~\footnote{Contracts that give the buyer (in this case, the option is called Call) or seller (in this case, the option is called Put) the right but not the obligation to exercise the contract only at the maturity date at the strike price agreed at time 0.}. However, this model makes certain assumptions that can lead to prices that differ from the real-world results. In fact, this model is based on the fundamental assumption that the asset price volatility is constant over the life of the derivative. In reality, the implied volatility, defined as the volatility parameter that, when plugged into the Black–Scholes formula, enables to reproduce the market price of an option,  shows a dependence on both the strike and the option maturity.  This behaviour is known as \emph{smile} when the volatility has a minimum at the forward asset price level or \emph{skew} when low strike implied volatilities are higher than high strike ones. In the same way, models with deterministic volatility can be used to retrieve prices for a given strike exactly. However, they are unsuccessful in reproducing non-flat volatility surfaces when a whole range of strikes is considered leading to models that do not consider smile and skew effects. \\ 

The above considerations suggest the requirement for alternative models able to suitably fit the vast set of prices available to a trader. Many approaches have been suggested to tackle the issue of a good fitting of market option data. Among them, we can mention local volatility models~\cite{brigo_book, vol_local, brigo}, where the volatility is a deterministic function of both the current asset level and time. A significant drawback of this class of models is that it leads to a flattening of the forward smile, which is unrealistic and not desirable when considering exotic options that depend on the forward smile. In this case, one requires a model which gives realistic smile dynamics.\\ 

Stochastic volatility models~\cite{hull_sto, heston_vol_sto, stoch_vol_model}, where the volatility is a stochastic process, can capture most of the essential stylised facts of implied volatility surfaces and give more realistic dynamics of the volatility smile/skew. However, they come with the significant issue of taking too long to calibrate. Their calibration (i.e. the optimisation technique of finding model parameters so that the implied volatility surface produced by the model best approximates a given market implied volatility surface in a suitable metric) is time and memory intensive. This is because it performs numerical optimisation of a function that has no analytical form (this criterion function is valued numerically using Monte Carlo (MC) or Partial Differential Equation (PDE) or Fourier and asymptotic methods). Therefore, efficient calibration is prohibitively expensive, making the stochastic volatility models impractical.\\

To fill this substantial gap, alternative calibration methods based on Deep Learning (DL) techniques have been recently used to build fast and accurate solutions to the calibration problem. They can be classified into two principal approaches: 
The first approach, known as the \emph{one-step} approach, has been developed by A.Hernandez~\cite{hernandez}. The author implemented an Artificial Neural Network (ANN) that takes market data as inputs and returns calibrated model parameters as output. This approach was used to calibrate the Hull \& White model~\cite{hull_white}. The second approach, known as the \emph{two-step} approach, consists in applying the DL technique to price vanilla European options (i.e. the calibration instruments) and then calibrating the model on the trained network using the traditional optimisation methods. The latter approach has been used by Liu et al.~\cite{liu_1} to calibrate Heston and Bates models and Horvath et al.~\cite{mehdi2021deep} for the Rough Bergomi model. \\

Huge and Savine~\cite{differential_ML} developed a Differential Machine Learning (DML) approach that is supervised learning, where machine learning models are trained on samples of not only inputs and labels but also differentials of labels with respect to inputs in the context of risk management of financial Derivatives. These innovative algorithms allow fast training, accurate pricing, and risk approximations in real-time that converge. This approach applies to derivatives instruments under stochastic models, even too complex for closed-form solutions, of the underlying variables. However, applying this novel approach to stochastic volatility models is not explored. \\

The current work aims to present a data-driven framework for model calibration that allows significant computation time-saving by taking advantage of the powerful approximating capability of ANNs. Specifically, the original contributions of this study are:
\begin{itemize}
    \item Applying the DML technique to calibrate the Heston model and empirically demonstrate that the trained neural network dramatically reduces Heston calibration’s computation time and that better efficiency and accuracy in the calibration procedure are achieved.
    \item Applying different regularisation techniques to the DML and comparing their performance in reducing overfitting and improving the generalisation error.
    \item Comparing the DML performance to the classical DL (without differentiation) one to price vanilla European options when the underlying asset follows a Heston model. As a result, the DML vastly outperforms the DL.
    \item Demonstrating in the case of the Heston Model that DML could learn effectively from small datasets compared to DL. Thus training is fast (fewer labels are needed), and the pricing is accurate.\\
\end{itemize}

This work is organised as follows: In Section~\ref{chap:backgroung}, we present the calibration process in finance as well as the traditional methods to tackle the problem. Additionally, we detail how DL methods can be used to approach the calibration problem and describe the DML approach. In Section~\ref{chapter:heston_model}, we present the Heston model, describe the pricing and derive the price sensitivities to the model, option and market parameters. Next, in Section~\ref{chap:implementation_numerical_results}, we compare the DML performance to the classical DL (without differentiation).  Furthermore, we introduce different regularisation techniques, apply them notably in the case of the DML and compare their performance. Then, in
Section~\ref{chap:calibration_results}, we compare the calibration based on the neural network trained via DML (in the sequel, we name it Differential Neural Network (DNN)) to the traditional one and present how the trained neural network dramatically reduces Heston calibration’s computation time. These last two sections provide a realistic and thorough evaluation of the obtained results. Finally, in Section~\ref{chap:conclusion}, we summarise our findings and contributions and present directions for further research.

\paragraph{Acknowledgments} We would like to thank Arthur Gervais and Zhipeng Wang for their constructive comments and suggestions.

%%%%%%%%%%%%%%%%%%%%%%%%%%%%%%%%%%%%

\section{Background and Literature Review}
\label{chap:backgroung}

This work aims to take advantage of DL techniques, particularly the DML technique, to calibrate models to market data. This approach allows considerable computation time-saving. In section~\ref{section:model_calibration}, we start by presenting the calibration procedure in finance as well as traditional methods to tackle the problem; then, we detail how DL methods can be used to approach the calibration problem. Subsequently,  in section~\ref{section:FFNN}, we present the Standard Feed Forward Neural Network and detail the network training process. Next, in Section~\ref{section:DML}, we describe the DML approach.

% \subsubsection{Latin Hypercube sampling}
% \label{section:LHS}

% Latin hypercube sampling is a statistical method for generating random samples of parameter values with equal intervals.\\

% The method performs the sampling by ensuring that each sample is positioned in a space $\Omega$ of dimension ${\displaystyle d}$ as the only sample in each hyperplane of dimension ${\displaystyle d-1}$ aligned on the coordinates which define its position. Each sample is therefore positioned according to the position of the samples positioned previously to ensure that they do not have common coordinates in the space $\Omega$. This method, therefore, requires knowing the position of the samples already positioned.\\

% The Latin hypercube sampling method is entirely accepted because it presents optimisable non-collapsing space-filling properties and is flexible regarding data density and location.

\subsection{Model Calibration}
\label{section:model_calibration}

In order to compute option prices, model parameters that are not observable from the market data are required. The process of adjusting the model parameters such that model prices are compatible with market ones is called calibration. An essential issue in the calibration process is that market data is insufficient to determine the parameters accurately. Many parameter sets may produce model prices that are compatible with the market prices. In practice, it is impossible to match the market prices. Therefore the problem of calibrating the model is an optimisation problem. The calibration goal is to minimise the objective function, also called the error function between the market and the model prices for a set of parameters.\\

Various formulations for the objective function exist in the literature. In the present work, we choose to use the Weighted Root Mean Square Error (Weighted RMSE) between the model prices and the market ones with respect to the set $\Theta$ of parameters.\\

Denote $\theta \in \Theta$ the model parameters, $m = \ln(S_t/K)$ the option log-moneyness for strike $K$ and spot $S_t$, $T$ the option maturity, $\hat{V}$ and $V_{Mkt}$ respectively the model and market option prices. The weighted Root Mean Squared Error (Weighted RMSE) between the model and the market prices is given by:
\begin{equation*}
    J(\theta) = \sqrt{\sum_i \sum_j w_{i,j} \left(\hat{V}(\theta, T_i, m_j) - V_{Mkt}(T_i, m_j)\right)^2}
\end{equation*}

$w_{i,j}$ are the weights, they depend on the log-moneyness and the maturity and are chosen to reflect the importance of an option and the reliability of the market observations.

\subsubsection{Optimisation Algorithms}
\label{section:traditional_methods}

To solve the optimisation problem associated with the calibration procedure, several methods have been developed. Among them are:

\subsubsubsection{Nelder-Mead}

The Nelder-Mead method~\cite{nelder_mead} is a nonlinear optimisation algorithm published by John Nelder and Roger Mead in $1965$. It is a heuristic numerical method that seeks to minimise a continuous function in a multi-dimensional space.

Also called the downhill simplex method, the algorithm exploits the concept of simplex\footnote{In mathematics, particularly in geometry, a simplex is a triangle generalisation of the triangle to any dimension.}, a polytope\footnote{A polytope is a geometric mathematical object. A polyhedron is the best-known kind of polytope.} of $N+1$ vertices in an $N$-dimensional space. Starting initially from such a simplex, it undergoes simple transformations during the iterations: it deforms, moves and gradually shrinks until its vertices approach a point where the function is locally minimal.\\

\subsubsubsection{Differential Evolution}

Another approach, the Differential Evolution (DE)~\cite{storn}, has been developed. It is derivative-free and does not require a parameter initialisation; it allows finding a global minimum even if the objective function is non-convex. As part of this approach, Liu et al.~\cite{liu_1} described the optimisation procedure: \\

Generate $N_p$ individuals $\theta: (\theta_1, ..., \theta_{N_P}$) for calibration, add for each one a differential mutant $\theta^{'}_i = \theta_a + F . (\theta_b - \theta_c)$, where $a \neq i$, $a,b,c$ are randomly selected, $F \in [0, \infty[$ the differential weight that determines the step size of the evolution and $\theta_a$ is either a random number (strategy \emph{rand1bin}) or the best candidate of the precedent population (strategy \emph{best1bin}). Therefore, some examples are filtered out with respect to a Crossover probability $Cr$. For each individual $i$ not filtered out, the initial individual $\theta_i$ and its mutant $\theta^{'}_i$ are compared according to the objective function $J$. If the mutant $\theta^{'}_i$ returns a smaller objective function, it replaces the initial population $\theta_i.$ \\

This process is repeated until the algorithm converges or verifies a predefined criterion. The parameter selection has a significant impact on the algorithm's performance. For instance, a large mutation rate and population size may raise the probability of finding a global minimum. Another parameter, called convergence tolerance, measures the diversity within a population and determines when to finish the optimisation procedure. \\

The speed of the calibration procedure is significantly impacted by the vanilla options' pricing duration (which directly affects the objective function computation time). Inspired by the works of Liu et al.~\cite{liu_1} and Horvath et al.~\cite{mehdi2021deep}, we explore alternative methods based on ANN. They lead to a significant speed-up of the calibration procedure while preserving the accuracy of the resulting parameters.

\subsubsection{Artificial Neural Networks Based Approach}
\label{section: ANN_approach}

\subsubsubsection{Artificial Neural Networks as Function Approximators}

Consider that we would like to estimate the correct pricing function $f(x)$ by an approximate function $\hat{f}(x)$. Neural networks $F$, Defined as a function of inputs $x$ and weights $\mathbf{W}$, when trained on a dataset of inputs $x$ and outputs $y=f(x)$ are able to accurately approximate this function by finding optimal wights $\hat{\mathbf{W}}$. We can then write $F(x, \hat{\mathbf{W}}) = \hat{f}(x)$.\\

Recently, ANNs have been applied to the calibration problem. An approach pioneered by A. Hernandez~\cite{hernandez} considers the calibration as an inverse map between the market implied volatility surface and the model parameters. Hernandez applied this approach to calibrate the Hull-White model. However, even though this approach is attractive since, thanks to ANNs, it directly gives calibrated parameters using market implied volatility surface, the dependence of the network training on historical market implied volatility surfaces highlights a significant disadvantage of this approach. Indeed the availability of reliable historical market data is limited, which may lead to overfitting when the ANN is exposed to unseen data, which would happen in case of regime-switching on the market. Hence neural networks should be re-trained often due to instability, which is highly time-consuming. \\

Another approach, based on two steps, has been developed and used by Bayer et al.~\cite{mehdi_2019} and Horvath et al.~\cite{mehdi2021deep} to calibrate the Rough Bergomi model as well as Liu et al.~\cite{liu_1} for the Heston and Bates model.

\subsubsubsection{Two-step Approach as presented in~\cite{mehdi2021deep}}
\label{section:two_step_approach}

This section looks in detail at the causes of the calibration slowness. In general, thousands of evaluations of the pricing function on options with different strikes and maturities are required to find the optimal set of parameters. Nevertheless, this function is generally unknown and should be approximated via numerical methods such as MC, Fourier transformation or PDE resolution via Finite Difference. Therefore, the objective function to minimise becomes: 
\begin{equation*}
    J(\theta) = \sqrt{\sum_i \sum_j w_{i,j} \left(\hat{V}(\theta, T_i, m_j) - V_{Mkt}(T_i, m_j)\right)^2}
\end{equation*}

where $\hat{V}(\theta, T_i, k_j)$ is an estimator of the pricing function for the $i^{th}$ maturity and $j^{th}$ strike available on the market obtained using the aforementioned methods.

These numerical methods are computationally expensive and cause the slowness of the calibration process. Furthermore, they lead to numerical instabilities because of numerical approximations. Replacing this part of the calibration process with a neural network allows the minimisation to become "deterministic" and thus hugely efficient. Hence, the price computed, via numerical methods, to minimise the objective function is replaced by the one obtained by a trained ANN.\\

% \subsubsubsubsection{Approximation of the pricing map via neural networks}

\textbf{\emph{Approximation of the Pricing Map via Artificial Neural Networks}}\\

Recall that $\hat{V}$ is the model price estimator, a function of the model set of parameters $\theta$, the strike $T$ and log-moneyness $m$.\\
A neural network is trained to learn the pricing map as mentioned above. Optimal weights $\mathbf{W}$  need to be determined. Therefore, if we consider a training dataset of $N$ samples, the training is given by the following Weighted RMSE minimisation problem:

\begin{equation*}
    \hat{\mathbf{W}} := argmin_{\mathbf{W}\in\R^n} \sqrt{\sum^N_{i=1} \left(F(\mathbf{W},\theta_i,T_i,m_i)-\hat{V}(\theta_i,T_i,m_i) \right)^2}
\label{w_hat}    
\end{equation*}

Stochastic Gradient Descent~\cite{robbins_monro} or its variants such as Adam~\cite{adam}, is applied to find the optimal weights $\hat{\mathbf{W}}$. Details to follow in section~\ref{section:FFNN}.\\

% \subsubsubsubsection{Model calibration using the trained neural network}

\textbf{\emph{Model Calibration Using the Trained Neural Network}}\\

Once the optimal weights $\hat{\mathbf{W}}$ are determined, the pricing function given by the trained ANN replaces the true pricing function or its estimator, and then the calibration is executed. The model calibration problem becomes: 
\begin{equation*}
    \hat{\theta} := argmin_{\theta\in\Theta} \sqrt{\sum_{i}\sum_j w_{i,j} \left(F(\hat{\mathbf{W}}, \theta_i, T_i, m_i) - V_{Mkt}(T_i,m_j) \right)^2}
\label{w_hat_2}
\end{equation*}

A training dataset should be generated to train a neural network on the pricing map, including model, option and market parameters.\\ 

To summarise, the two-step approach that we use in this work, is as follows:

\begin{enumerate}
    \item generate a synthetic dataset for the input parameters for the model, the option and the market ones.
    \item Compute the corresponding option prices, which are considered as outputs.
    \item Split the above dataset into training and testing datasets.
    \item Train the ANN on the training dataset and evaluate it on the test dataset. 
    \item Calibrate the model on the trained network.
\end{enumerate}

This approach is stable. Indeed, using synthetic rather than historical data enables the neural networks to generate accurate results at future market states that might not have already happened. Furthermore, this method decomposes the calibration error into a pricing error (coming from the neural network) and a model miscalibration to the market data. Therefore, the neural network is trained once, making the two-step approach more robust and time-consuming.\\

Ferguson and Green~\cite{ferguson_green} noticed that the neural network gives more accurate results than MC methods. In addition, the computation of the neural network function and its gradient with respect to the model parameters is fast, accurate and efficient thanks to backpropagation, offering fast pricing and then fast calibration. This is due, as stated previously, to the fact that the optimisation problem becomes deterministic because of the ANN approximation.

\subsection{Standard Feed Forward Neural Network}
\label{section:FFNN}

Despite being the most basic architecture, Feed Forward Neural Networks (FFNNs) are considered powerful high-dimensional functional approximators. In the present study, the ANN that we use is a FFNN. This section describes standard FFNNs.\\ 

Consider a neural network with $L$ layers. Let $l \in \{1, . . . , L\}$ index the layers of the network. Let $x \in \mathbb{R}^n$ and $y \in \mathbb{R}$ denote the vector of inputs into the network and the predicted value respectively. $\mathbf{W}^{(l)} \in \mathbb{R}^{n_{l-1} \times n_l}$ and $\mathbf{b}^{(l)} \in \mathbb{R}^{n_l}$ are the weights and biases at layer $l$.
The feedforward operation of a standard neural network can be described as: for $l \in \{1, . . . , L\}$ 

\begin{align}
\label{equation:feedforward_eq}
z^{(0)} & = x  \nonumber \\
z^{(l)} & = \varphi(z^{(l-1)}) \mathbf{W}^{(l)} + \mathbf{b}^{(l)}  \\
y & = z^{(L)}  \nonumber
\end{align}

where $z^{(l)} \in \mathbb{R}^{n_l}$ is the vector containing the $n_l$ neurons, in layer $l$. The function $\varphi: \mathbb{R} \rightarrow \mathbb{R}$ is called activation function. It is typically nonlinear; otherwise, a network with more than one layer could be simplified into a single linear function. In addition, it is applied component-wise for the outputs of $\mathbf{W}^{(l)}$. The input and output layers are the first ($l=0$) and last ($l=L$) layers. Intermediate layers, $l \in \{1, . . . , L-1\}$, are called hidden layers.

\subsubsection{Initialisation}

The result of training a neural network is sensitive to the starting point. Therefore, it is necessary to initialise the network weights before starting the training. To do this, there are different techniques. By default, the weights are often initialised using a uniform or normal distribution, but more sophisticated methods exist, such as \emph{Xavier}'s or \emph{Kaiming}'s initialisation methods. On the other hand, it is often preferable to use the weights of a pre-trained network to initialise the network. These different approaches are discussed below.

\subsubsubsection{Kaiming}

Kaiming's initialisation function~\cite{kaiming} is a function that has been proposed specifically for neural networks using an activation function like ReLu~\cite{relu}. It has also been proposed as a replacement for the Xavier initialisation~\cite{xavier} for this type of network because the authors of the latter method assumed that the activation was linear, which is not the case for the ReLu. The authors of the former approach also claim that their method improves the convergence of very deep networks.

The main idea of their work was to investigate the variance of each response produced by a layer. In doing so, they concluded that a good way to initialise a layer is to use a Gaussian distribution having zero mean and standard deviation of $\sqrt{\frac{2}{n_l}}$ with $n_l$ representing the number of connections of a response.

%%%%%%%%%%%%%%%%%%%%%%%%%%%%%%%%%%%

\subsubsection{Training Standard Feed Forward Neural Network}
\label{section:train_FFNN}

Neural network training is divided into two parts, forward pass and reverse pass. In the forward pass described above, the network computes an output value for the input data. The reverse pass must now be taken, using the backpropagation technique to train the model.\\

Let us store training data in matrices:

\begin{equation*}
    X = \left[ \begin{array}{c} x_1 \\ \vdots \\ x_m  \end{array} \right] \in \mathbb{R}^{m \times n} , \ Y = \left[ \begin{array}{c} y_1 \\ \vdots \\ y_m  \end{array} \right] \in \mathbb{R}^m
\end{equation*}
where $m$ is the number of samples and $n$ is the dimension of the input.\\

During the evaluation of the network, the following matrix is computed:

\begin{equation*}
    Z^{(l)} = \left[ \begin{array}{c} z^{(l)}_1 \\ \vdots \\ z^{(l)}_m  \end{array} \right] \in \mathbb{R}^{m \times n_l}
\end{equation*}

\subsubsubsection{Cost Function}

The purpose of the network is to estimate the correct pricing function $f(x)$, which takes input $x$ and predicts the target $y$. We, therefore, want the prediction $\hat{y}$ to be as close as possible to the target $y$. In order to evaluate the performance of the model, we use for each calculated $\hat{y}$ an objective function, also called cost or loss or error function, whose value decreases when the quality of the prediction increases. That is, the lower its value, the better the model performs. Different cost functions can be used, the most common being the \emph{Mean Square Error} (MSE) which is a measure of distance between the prediction vector and the target vector:
\begin{align}
\label{equation:cost_fct}
    \mathbf{C} & = \frac{1}{m} || Y - Z^{(L)} ||^2  \nonumber \\
      & = \frac{1}{m} \left(Y - Z^{(L)}\right)^T \left(Y - Z^{(L)} \right)
\end{align}

\subsubsubsection{Gradient Descent}

Once we have a cost function, we use the gradient descent technique to train our neural network. Training a model consists of modifying the value of the connections (weights and biases) in order to obtain predictions that are closer to the targets. It is necessary to calculate the gradient of the cost according to each of the weights of the model. This calculation is used to modify the value of the weights to obtain predictions which are close to the targets.\\

It is possible to calculate the gradient for each network parameter thanks to each parameter's partial derivative. For example, assume a very simple network with a single neuron and two parameters: $w_1$ and $w_2$. Suppose also that the neuron uses an activation function $\varphi(y) = y^2$
and that the cost function is the following: $\mathbf{C}(z) = 2z$. By including all the parameters in the cost function, we obtain the following function:
\begin{equation*}
    \mathbf{C} = 2\ (w_1 x_1 + w_2 x_2)^2
\end{equation*}
It then suffices to derive this function according to each parameter to obtain the associated gradient. For example, for $w_1$, we get $\frac{\delta \mathbf{C}}{\delta w_1} = 4 x_1 (w_1 x_1 + w_2 x_2)$.\\

Computing the derivative independently for each parameter can be long and tedious, which is why the backpropagation technique is used. Indeed, this technique efficiently calculates the gradients of a neural network's weights. Backpropagation is aptly named because the flow of computations goes opposite to forward propagation. So it starts at the exit and heads towards the entrance. The idea is to calculate the derivative of the cost function with respect to the output layer and then propagate this information through the network to the model's input. The gradient of each hidden layer is expressed by reusing the derivatives of the layers that follow them.\\

Finally, to update the weights of the network,  it suffices to slightly modify the parameters in the opposite direction to the gradient according to a hyperparameter $lr$ called \emph{learning rate}, generally very small. The equation for modifying the parameter $w_1$ then becomes:
\begin{equation*}
    w_1 = w_1 - lr \frac{\delta \mathbf{C}}{\delta w_1}
\end{equation*}

\subsubsubsection{Stochastic Gradient Descent}

According to the definition of the gradient descent method, it applies after the network has seen the dataset and has calculated the value of the total cost function. However, this method neglects the information from the first training examples and is very computationally expensive.\\

In practice, the training is only done on a small, randomly chosen subset of the data. The backward pass is made only after a small subset of the data has been shown to the network. The more significant this number of data, the more stable the training because the calculated gradient more closely represents the true distribution of the training dataset. The smaller this number of data, the more random the training is, but this can prevent the network from getting stuck in a local minimum.\\

With the stochastic gradient descent, we, therefore, train on batches of data. Each time a batch is presented to the network and the parameter values are updated, it is said to come to iterate. Once all the data has been presented to the network at least once, we say that we have just completed a cycle or an \emph{epoch}.

\subsubsubsection{Adam}

One of the first improvements to gradient descent was the addition of momentum. Basically, the momentum allows to continue in the same direction as in the previous iteration, which makes it possible to speed up the optimisation. In particular, this method is very effective when the gradient direction remains relatively constant. Indeed, the longer the optimisation remains in the same direction, the greater the step to modify the parameters. Since then, several other optimisation methods have emerged. Of these, Adam~\cite{adam} is one of the most commonly used.\\

Adam is defined as follows. First, the algorithm uses first and second degree momentum. These two moments are defined as follows:

\begin{equation*}
    m_1 = \beta_1 . m_1 + (1 - \beta_1) . dx,
\end{equation*}

\begin{equation*}
    m_2 = \beta_2 . m_2 + (1 - \beta_2) . dx . dx.
\end{equation*}

$\beta_1$ and $\beta_2$ are hyperparameters usually initialised to 0.9 and 0.999, respectively. $dx$ represents the vector of the gradients calculated during the backpropagation. As $m_1$ and $m_2$ are initialised starting at zero, two terms are added to correct the bias. These terms are:
\begin{equation*}
    b_1 = \frac{m_1}{1 - \beta_1^t}
\end{equation*}

\begin{equation*}
    b_2 = \frac{m_2}{1 - \beta_2^t}
\end{equation*}
with $t$ representing the current iteration of the algorithm. Finally, the equation used by Adam to update the network weights is as follows:
\begin{equation*}
x = x - lr \frac{b_1}{\sqrt{b_2} + 10^{-8}}    
\end{equation*}

Adam is very popular because it allows a fast workout that produces good results. Moreover, it is relatively simple to implement. This is why it is often used as the default optimisation method. The only disadvantage to using Adam is that this method requires a little more memory than the others, although, in practice, this amount is often negligible and does not interfere with training.

\subsection{Differential Machine Learning}
\label{section:DML}

Huge and Savine~\cite{differential_ML} developed a Differential Machine Learning (DML) approach that is supervised learning, where Machine Learning models are trained on samples of not only inputs and labels but also differentials of labels with respect to inputs in the context of risk management of financial Derivatives. These innovative algorithms allow fast training, accurate pricing, and risk approximations in real-time that converge. Their approach applies to derivatives instruments under stochastic models, even too complex for closed-form solutions, of the underlying variables. \\

In the sequel, as in~\cite{differential_ML}, we show that  DML is an extension of classical DL applied to FFNN presented in section~\ref{section:FFNN}.

\subsubsection{Backpropagation}

We saw in section~\ref{section:FFNN} that backpropagation is a technique used to calculate the gradients of a neural network's weights. In this section, we will define backpropagation differently.\\

The differentials of the predicted value $y = z^{(L)}$ with respect to inputs $x = z^{(0)}$ are derived by differentiation of~(\ref{equation:feedforward_eq}), in the reverse order: for $l \in \{L, . . . , 1\}$ 

\begin{align}
\label{equation:backpropagation_eq}
\overline{z}^{(L)} & = \overline{y} = 1  \nonumber \\
\overline{z}^{(l-1)} & = \left(\overline{z}^{(l)} \mathbf{W}^{{(l)}^T} \right) \circ \varphi'(z^{(l-1)})   \\
\overline{x} & = \overline{z}^{(0)}  \nonumber
\end{align}

with adjoints $\overline{x} = \frac{\partial y}{\partial x}, \overline{z}_l = \frac{\partial y}{\partial z_l}, \overline{y} = \frac{\partial y}{\partial y} = 1$ and where $\circ$ is the element-wise product.\\

Backpropagation determines a feedforward neural network with inputs $\overline{y}, z^{(0)}, . . . , z^{(L)}$ and output $\overline{x} \in \mathbb{R}^n$,  where the weights are shared with the neural network defined by~(\ref{equation:feedforward_eq}) and the neurons are the adjoints of the corresponding neurons in the neural network~(\ref{equation:feedforward_eq}).

\subsubsection{Twin Network}

Huge and Savine~\cite{differential_ML} suggest concatenating equations~(\ref{equation:feedforward_eq}) and~(\ref{equation:backpropagation_eq})  into a single network representation they call \emph{twin network}.\\ 

The twin network enables the computation of a prediction jointly with its differentials with respect to inputs. The first half of the twin network is computed with the feedforward equations~(\ref{equation:feedforward_eq}) to predict a value $y$ (price). The second half is the mirror image of the first one. It is calculated with the backpropagation equations~(\ref{equation:backpropagation_eq}) to predict the differentials  $\overline{x}$ of $y$ with respect to inputs $x$. It shares weights with the first half. 
According to~(\ref{equation:backpropagation_eq}),  the neurons of the second half of the twin network are activated with the differentials $\varphi'$ of the activations $\varphi$ of the first half. Therefore, to backpropagate across the twin network, the initial activation should be $C^1 $, ruling out, for instance, \emph{ReLU} activation function. 

\subsubsection{Training as Part of the Twin Network}
\label{section:train_twin_network}

As in the case of the standard FFNN, the goal of the twin network is to estimate the correct pricing function $f(x)$, which takes input $x$ and predicts the target $y$. The difference is that the twin network learns optimal weights and biases from an augmented training set $\left(X, Y, \overline{X}\right)$, where $\overline{X}$ is the matrix of differential labels defined by:

\begin{equation*}
      \overline{X} = \left[ \begin{array}{c} \overline{x}_1 \\ \vdots \\ \overline{x}_m  \end{array} \right] \in \mathbb{R}^{m \times n} 
\end{equation*}

where $\overline{x}_i= \frac{\partial y_i}{\partial x_i}$.\\

The following matrices are computed when evaluating the twin network: $l \in \{L, . . . , 1\}$

\begin{equation*}
    Z^{(l)} = \left[ \begin{array}{c} z^{(l)}_1 \\ \vdots \\ z^{(l)}_m  \end{array} \right] \in \mathbb{R}^{m \times n_l} \ \mbox{~and~} \ \overline{Z}^{(l)} = \left[ \begin{array}{c} \overline{z}^{(l)}_1 \\ \vdots \\ \overline{z}^{(l)}_m  \end{array} \right] \in \mathbb{R}^{m \times n_l}
\end{equation*}
 
$Z^{(l)}$ (respectively $\overline{Z}^{(l)}$) is computed in the first (respectively second) half of the twin network.\\

Training with differentials $\overline{X}$ alone consists in finding weights and biases that minimise the following MSE between the label differentials and predicted differentials: 

\begin{equation}
    \label{equation:cost_fct_diff_alone}
    \mathbf{C} = \frac{1}{m}\  tr\left[\left(\overline{X} - \overline{Z}^{(0)}\right)^T \left(\overline{X} - \overline{Z}^{(0)} \right) \right]
\end{equation}
\\

When training a DNN, values and differential errors are concatenated in the cost function. Also, a penalty $\lambda$ for wrong differentials is allocated in the definition of the cost function to stop fitting noisy labels. Therefore, training a DNN consists in finding optimal weights and biases that minimise a combination between the cost functions defined by~(\ref{equation:cost_fct}) and~(\ref{equation:cost_fct_diff_alone}):

\begin{equation}
    \label{equation:cost_fct_dnn}
    \mathbf{C} = \frac{1}{m}\  \left(Y - Z^{(L)}\right)^T \left(Y - Z^{(L)} \right) + \frac{\lambda}{m} \   tr\left[\left(\overline{X} - \overline{Z}^{(0)}\right)^T \left(\overline{X} - \overline{Z}^{(0)} \right)\right] 
\end{equation}
\\

Allocating a penalty for wrong differentials allows the differential network training to stop fitting noisy labels. These criteria make label differentials act as a form of regularisation, similar to data augmentation in computer vision. The differential of labels with respect to inputs are extra training data of a different type that shares inputs with existing data; it reduces variance by augmenting the dataset size without introducing bias. Augmenting the dataset with label differentials is essential in areas where the dimension is high, and datasets are small.

\section{Application to Heston Model}
\label{chapter:heston_model}

The present study aims to apply the DML approach to price vanilla European options (i.e. the calibration instruments), specifically, puts when the underlying asset follows a Heston model~\cite{heston} and then calibrate the model using the trained network. The Heston model is one of the most popular stochastic volatility models, mainly due to its ability to produce semi-analytic formulas for plain vanilla option prices and sensitivities. In the present section, we present the model and describe the pricing under it in section~\ref{section:lit_rev_heston}. Next, we add one of our contributions: we define the Normalised Forward Put Price in section~\ref{section:nfpp} and derive semi-closed formulas for its differentials with respect to model, option and market parameters in section~\ref{section:diff_p_prime_inputs}.

\subsection{Literature Review}
\label{section:lit_rev_heston}
\subsubsection{Model Presentation}
\label{section:model_pres}

We introduce a completed filtered probability space $(\Omega, \mathcal{F}, (\mathcal{F}_t)_{t \geq 0}, \mathbb{Q})$ satisfying the usual hypothesis, where $\mathbb{Q}$ is the risk-neutral measure associated with the savings account numéraire $D_t := e^{rt}$ and $r$ is the risk-free interest rate. The Heston model is a stochastic volatility model where the dynamic of the underlying $S=\{S_t, t\geq0\}$ and its volatility $\sqrt{v_t}$ can be described by the following Stochastic Differential Equation (SDE):

\begin{equation}
\label{equation:heston_sde}
\left\{ \begin{array}{rcl}

 dS_t & = & S_t(rdt + \sqrt{v_t} dW^1_t)\\

 dv_t & = & \kappa(\theta - v_t)dt + \sigma \sqrt{v_t} dW^2_t
\end{array}\right\}
\end{equation}

where $W^1$ and $W^2$ are $\mathbb{Q}$ standard Brownian motions with:

\begin{equation*}
    d\langle W^1(t), W^2(t) \rangle = \rho dt
\end{equation*}

The instantaneous variance parameter is modelled via the CIR mean reverting process, a diffusion introduced by Cox, Ingersoll and Ross~\cite{CIR} to model short-term interest rate, which avoids null or negative values under the Feller condition, namely $2\kappa \theta > \sigma^2$. Indeed, The standard deviation term, $\sigma \sqrt v_t$, prevents negative values for all positive values of $\kappa$ and $\theta$. The zero value is also excluded if the Feller condition is satisfied.\\

The Heston instantaneous variance can be written as:
\begin{equation*}
    v_t = \theta + \exp^{-\kappa t} (v_0-\theta) + \sigma {\displaystyle \int^t_0 e^{-\kappa (t-s)} \sqrt{v_s} \,dW^2_t}
\end{equation*}

Straightforward calculations lead to:
\begin{equation*}
\left\{ \begin{array}{c}
\mathbb{E}[v_t] = e^{-\kappa t}(v_0 - \theta) + \theta\\
Var[v_t] = \frac{\sigma^2 \theta}{2\kappa} + \frac{\sigma^2 (v_0 - \theta)}{\kappa} e^{-\kappa t} + \frac{\sigma^2 (\theta -2v_0)}{2\kappa} e^{-2\kappa t}
\end{array}\right.
\end{equation*}
where $\mathbb{E}$ and $Var$ denote respectively the expectation and the variance. Particularly, for small time $t$, $Var[v_t] = v_0 \sigma^2 t$.\\

The above expressions show that the variance, initially equal to $v_0$, converges in expectation towards the long-term mean-variance $\theta$, with a speed of adjustment governed by the strictly positive parameter $\kappa$ called reversion speed or mean reversion rate. The short-term variance of $v$ indicates that its degree of stochasticity is mainly driven by $\sigma$ the volatility of variance. Finally, the correlation between the two Brownian motions shows that the spot is correlated with its volatility according to the quantity $-1<\rho<1$, which is ideally negative, signifying that a down move in the spot price is correlated with an up move in the volatility. Thus, such a dynamic can be considered economically realistic with respect to the market observations of historical volatility.

\subsubsection{Pricing under the Model}
\label{section:pricing_heston}

\subsubsubsection{Heston's Formula}
European Option prices with payoff $h(S_T)$ and maturing at time $T$, expressed as $U$ are solutions of the following PDE~\cite{liu_1}:

\begin{equation}
\label{eq:pde_heston}
    \frac{\partial U}{\partial t} + r S \  \frac{\partial U}{\partial S} + k (\theta-v) \  \frac{\partial U}{\partial v} + \frac{1}{2} v S^2 \  \frac{\partial^2 U}{\partial S^2} + \rho \sigma S v \ \frac{\partial^2 U}{\partial S \partial v} + \frac{1}{2} \sigma^2 v \ \frac{\partial^2 U}{\partial v^2} - r U = 0
\end{equation}

Note that the payoff of European call (resp. put) with strike price $K$ is $(S_T - K)_+$ (resp. $(K - S_T)_+$).\\

Under complete market and arbitrage-free conditions, using the fact that Heston model is a Markovian affine process, as demonstrated in~\cite{math_modeling_grzelak} and applying the Feynman-Kac theorem to the PDE~(\ref{eq:pde_heston}), the European option price at time $t$ is:

\begin{equation}
    U_t = \mathbb{E}[e^{-r(T-t)}h(S_T)|\mathcal{F}_t] = \mathbb{E}[e^{-r(T-t)}h(S_T)| S_t, v_t]
\end{equation}
where $\mathbb{E}$ denotes the conditional expectation.

Heston~\cite{heston} has obtained a semi-analytic formula that involves a complex integral for the valuation of European options using the characteristic function of the model. For a call, this formula is given by:

\begin{equation}
\label{equation:heston_formula}
    C_t = S P_1 - K e^{-r(T-t)} P_2
\end{equation}

where for $j = 1,2$ and $x = \ln(S)$,

\begin{equation*}
    P_j = \frac{1}{2} + \frac{1}{\pi} {\displaystyle \int^{+ \infty}_0 Re\left[\frac{e^{- i u \ln(K)} \phi_j(u)}{i u}\right]\,du}
\end{equation*}

with:

\begin{equation*}
    \phi_j(u) = \exp\left(C_j(T - t, u) + D_j(T - t, u) v + i u x \right)
\end{equation*}

where:

\begin{equation*}
    D_j(T-t, u) = \frac{b_j - \rho \sigma u i + d}{\sigma^2} \left[\frac{1 - e^{d(T-t)}}{1 - g e^{d(T-t)}} \right]
\end{equation*}

and:

\begin{equation*}
    C_j(T-t, u) = r u i (T - t) + \frac{a}{\sigma^2} \left\{(b_j - \rho \sigma u i + d )(T - t) - 2 \ln \left[\frac{1 - g e^{d(T - t)}}{1 - g} \right] \right\}
\end{equation*}

where 
\begin{equation*}
    g = \frac{b_j - \rho \sigma u i +d}{b_j - \rho \sigma u i - d}
\end{equation*} and:\\

$d = \sqrt{(\rho \sigma u i - b_j)^2 - \sigma^2(2 y_j u i - u^2)}, a = \kappa \theta, y_1 = 0.5, y_2 = -0.5, b_1 = \kappa - \rho \sigma$ and $b_2 = \kappa$.\\

\subsubsubsection{Second Formula for the Characteristic Function}

There are two forms of the Heston characteristic function $\phi$ in the literature: the original one stated above and given in Heston~\cite{heston} and the one derived in Gatheral and Taleb~\cite{gatheral_taleb_phi} that we provide below:

\begin{equation}
\label{equation:phi}
    \phi_{T-t}(u) = \exp \left[A(u) + v_0 \ B(u)\right]
\end{equation}
where:

\begin{equation*}
    A(u)= \frac{\kappa \theta}{\sigma^2}\left[(\beta-d)(T-t) - 2 \ln \left(\frac{g e^{-d(T-t)}-1}{g - 1}\right)\right]
\end{equation*}

\begin{equation*}
    B(u) = \frac{\beta-d}{\sigma^2}\left(\frac{1-e^{-d(T-t)}}{1-g e^{-d(T-t)}}\right)
\end{equation*}
with:\\

$d=\sqrt{\beta^2-4 \hat{\alpha} \gamma}$, $g=\frac{\beta-d}{\beta+d}$,
$\hat{\alpha}=-\frac{1}{2}u(u+i)$, $\beta=\kappa-iu \sigma\rho$, $\gamma=\frac{1}{2}\sigma^2$\\

Albrecher et al.~\cite{little_trap} proved that the original form is subject to numerical instabilities under certain conditions. In contrast, the stability of the second form is guaranteed under the full dimensional and unrestricted parameter space. In our work, we adopt the second form for $\phi$.
 
\subsubsubsection{Lipton's Pricing Formula}

Lewis~\cite{lewis} and Lipton~\cite{lipton2001, lipton2002} have derived formulas for the valuation problem of 
European options. Both approaches are equivalent. We provide Lipton's formulation below:

\begin{equation}
\label{equation:call_lipton_formula}
    C_t = S_t - \frac{Ke^{-r(T-t)}}{\pi} {\displaystyle \int^{+ \infty}_0 Re \left[e^{ (i u + \frac{1}{2})\hat{F}_{t,T}} \phi_{T-t}(u-\frac{i}{2})\right]\frac{\,du}{u^2+\frac{1}{4}}}
\end{equation}

where $\hat{F}_{t,T}:=\ln\left(\frac{S_t e^{r(T-t)}}{K}\right)$ denotes the log-moneyness forward. Note that $\hat{F}_{t,T} = m + r(T-t)$.\\

We prefer Equation~(\ref{equation:call_lipton_formula}) that we use in this project to~(\ref{equation:heston_formula}) provided by Heston because it has the following advantages:
\begin{itemize}
    \item less computational effort because we must calculate only one numerical integral for each price,
    \item the integrand’s denominator is quadratic in u,
    \item no need to compute the limit of the integrand near 0.
\end{itemize}

\subsection{The Normalised Forward Put Price}
\label{section:nfpp}
We define the normalised forward call price by:
\begin{equation*}
    \hat{C}_t := \frac{e^{r(T-t)}}{K} \ C_t
\end{equation*}
A straightforward calculation with the call price calculated using (\ref{equation:call_lipton_formula}) leads to:

\begin{equation*}
    \hat{C}_t = e^k - \frac{1}{\pi} {\displaystyle \int^{+ \infty}_0 Re \left[e^{ (i u + \frac{1}{2}) \hat{F}_{t,T} }\  \phi_{T-t}(u-\frac{i}{2})\right]\frac{\,du}{u^2+\frac{1}{4}}}
\end{equation*}

The call-put parity relation allows to write the normalised forward put price:
\begin{equation}
\label{equation:nfpp_def}
    \hat{P}_t:=\frac{e^{r(T-t)}}{K} \ P_t
\end{equation}
where $P_t$ is the put price at time $t$ as:

\begin{equation}
\label{equation:p_prime}
    \hat{P}_t = 1 - \frac{1}{\pi} {\displaystyle \int^{+ \infty}_0 Re \left[e^{ (i u + \frac{1}{2})\hat{F}_{t,T} \ } \phi_{T-t}(u-\frac{i}{2})\right]\frac{\,du}{u^2+\frac{1}{4}}}
\end{equation}

\subsubsection{Black-Scholes as a Particular Case}

We demonstrate that the Black-Scholes model is embedded inside the Heston model. Indeed

\begin{proposition}
\label{proposition:bs_particular_case}
For small values of $\sigma$
\begin{equation}
\label{equation:bs_particular_case}
\sigma^2_{BS} = \theta + \frac{1-e^{-\kappa  (T-t)}}{\kappa (T-t)} \  (v_0-\theta)
\end{equation}
\end{proposition}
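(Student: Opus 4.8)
The plan is to work entirely at the level of the characteristic function, since two models sharing the same $\phi_{T-t}$ produce identical European prices through Lipton's formula~(\ref{equation:call_lipton_formula}) and hence the same normalised put~(\ref{equation:p_prime}). It therefore suffices to show that, as $\sigma \to 0$, the Gatheral--Taleb characteristic function~(\ref{equation:phi}) converges to that of a log-normal model, namely one of the form $\exp\!\big(-\tfrac12\,\sigma_{BS}^2\,(T-t)\,(u^2+iu)\big)$, and then to read off $\sigma_{BS}^2$ from the limiting exponent. A quick probabilistic check fixes the target: setting $\sigma=0$ in~(\ref{equation:heston_sde}) freezes the CIR diffusion into the deterministic path $v_s=\theta+e^{-\kappa s}(v_0-\theta)$, so the log-price is Gaussian with integrated variance $\int_0^{T-t} v_s\,ds$; dividing by $T-t$ already reproduces the right-hand side of~(\ref{equation:bs_particular_case}). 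The real content is to recover this through the small-$\sigma$ asymptotics of $A(u)$ and $B(u)$, which I carry out as follows.

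First I would expand the auxiliary quantities. With $\gamma=\tfrac12\sigma^2$, $\beta=\kappa-iu\sigma\rho$ and $\hat{\alpha}=-\tfrac12 u(u+i)$, from $d=\sqrt{\beta^2-4\hat{\alpha}\gamma}=\beta\sqrt{1-2\hat{\alpha}\sigma^2/\beta^2}$ one obtains $\beta-d=\hat{\alpha}\sigma^2/\beta+O(\sigma^4)$, hence $g=(\beta-d)/(\beta+d)=\hat{\alpha}\sigma^2/(2\beta^2)+O(\sigma^4)=O(\sigma^2)$, while $\beta\to\kappa$ and $d\to\kappa$. The term $B(u)$ is then immediate: the prefactor $(\beta-d)/\sigma^2\to\hat{\alpha}/\kappa$, the ratio $(1-e^{-d(T-t)})/(1-g e^{-d(T-t)})\to 1-e^{-\kappa(T-t)}$, so that $v_0\,B(u)\to -\tfrac12(u^2+iu)\,v_0\,(1-e^{-\kappa(T-t)})/\kappa$.

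The delicate step, and the main obstacle, is $A(u)$, because the prefactor $\kappa\theta/\sigma^2$ multiplies a bracket that vanishes to order $\sigma^2$; both surviving contributions must be tracked to that order. The first piece gives $(\beta-d)(T-t)=\hat{\alpha}\sigma^2(T-t)/\kappa+O(\sigma^4)$. The logarithmic piece requires the first-order expansion $-2\ln\!\big((1-g e^{-d(T-t)})/(1-g)\big)=-2g\,(1-e^{-\kappa(T-t)})+O(\sigma^4)$, which after substituting $g=\hat{\alpha}\sigma^2/(2\kappa^2)$ becomes $-\hat{\alpha}\sigma^2(1-e^{-\kappa(T-t)})/\kappa^2$. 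Multiplying the sum of the two pieces by $\kappa\theta/\sigma^2$ collapses the $\sigma$-dependence and yields $A(u)\to -\tfrac12(u^2+iu)\,\theta\big[(T-t)-(1-e^{-\kappa(T-t)})/\kappa\big]$. Getting the sign and order of this logarithm expansion right is where essentially all the care is needed.

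Finally I would add the two limits, factoring out $-\tfrac12(u^2+iu)$, to obtain the limiting exponent $-\tfrac12(u^2+iu)\big[\theta(T-t)+(v_0-\theta)(1-e^{-\kappa(T-t)})/\kappa\big]$. Matching this against the Black--Scholes exponent $-\tfrac12\,\sigma_{BS}^2(T-t)(u^2+iu)$ and dividing through by $T-t$ delivers~(\ref{equation:bs_particular_case}). The structure is reassuring because $\hat{\alpha}=-\tfrac12(u^2+iu)$ carries exactly the combined diffusion-plus-convexity factor of the log-normal characteristic function, so the match is clean; the probabilistic computation above serves as an independent confirmation of the same constant.
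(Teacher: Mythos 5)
Your proof is correct, but it follows a genuinely different route from the paper's. The paper's proof is exactly your ``quick probabilistic check'' and nothing more: for $\sigma$ near $0$ the variance SDE in~(\ref{equation:heston_sde}) degenerates to the ODE $dv_t = \kappa(\theta - v_t)\,dt$, whose solution $v_t = \theta + e^{-\kappa t}(v_0-\theta)$ is inserted into the definition of implied variance as time-integrated variance, $\sigma^2_{imp}T = \int_0^T v_s\,ds$, and the integral is evaluated. Your main argument instead runs through the Gatheral--Taleb characteristic function~(\ref{equation:phi}): expanding $\beta - d = \hat{\alpha}\sigma^2/\beta + O(\sigma^4)$ and $g = O(\sigma^2)$, you show $A(u) + v_0 B(u) \to -\tfrac12(u^2+iu)\bigl[\theta(T-t) + (v_0-\theta)(1-e^{-\kappa(T-t)})/\kappa\bigr]$ and match against the Black--Scholes exponent; I checked the delicate step, the order-$\sigma^2$ cancellation inside $A(u)$ against the $\kappa\theta/\sigma^2$ prefactor, and it is right. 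The trade-off: the paper's argument is shorter and more elementary, but it silently relies on the standard fact that deterministic time-dependent variance yields Black--Scholes prices with the time-averaged variance --- that is what licenses the identity~(\ref{equation:sigma_implicite}) it starts from. Your argument establishes the claim directly at the level of the pricing representation~(\ref{equation:call_lipton_formula}) actually used throughout the paper, so convergence of Heston prices to Black--Scholes prices follows with no appeal to that external fact. One cosmetic imprecision: since $\beta = \kappa - iu\sigma\rho$ carries an $O(\sigma)$ term, several of your remainders (in $(\beta-d)(T-t)$ and in the logarithm expansion) are $O(\sigma^3)$ rather than $O(\sigma^4)$ when $\rho \neq 0$; this is harmless, as they still vanish after division by $\sigma^2$, so the limit and hence the conclusion stand.
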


\begin{proof}
See Appendix \ref{section:dem_bs_particular_case}.
\end{proof}

In particular, $\sigma \sim 0$ and $\theta = v_0$ produce the Black-Scholes volatility $\sigma_{BS} = \sqrt{v_0}$.

Therefore, the European option Heston price under these parameter values is the Black-Scholes price. 

%%%%%%%%%%%%%%%%%%%%%%%%%%%%%%%%%%%%
\subsection{Differentials of the Normalised Forward Put Price}
\label{section:diff_p_prime_inputs}

In this section, we derive semi-closed formulas for the differential of the normalised forward put price with respect to the long-term mean-variance $\theta$, the log-moneyness $m$, the initial variance $v_0$, the risk-free rate $r$, the time to maturity $\tau:= T-t$, the mean reversion speed of variance process $\kappa$, the correlation between the variance and the underlying process $\rho$ and the volatility of the variance $\sigma$. 

%%%%%%%%%%%%%%%%%%%%%%%%%%%%%%%%%%%%%%%%%%%%
\subsubsubsection{Partial derivative $\frac{\partial \hat{P}}{\partial \theta}$}\\
\label{section:diff_wrt_theta}

The partial derivative of the normalised forward put price $\hat{P}$ with respect to the long-term mean-variance $\theta$ is given by:

\begin{equation}
\label{equation:diff_wrt_theta}
    \frac{\partial \hat{P}}{\partial \theta} =  - \frac{1}{\pi \theta} {\displaystyle \int^{+ \infty}_0 Re \left[e^{ (i u + \frac{1}{2})\hat{F}_{t,T}} A(u-\frac{i}{2}) \phi_{\tau}(u-\frac{i}{2}) \right] \frac{\,du}{u^2+\frac{1}{4}}}
\end{equation}

\begin{proof}
See Appendix \ref{section:demo_diff_wrt_theta} .
\end{proof}

%%%%%%%%%%%%%%%%%%%%%%%%%%%%%%%%%%%%%%%%
\subsubsubsection{Partial derivative $\frac{\partial \hat{P}}{\partial m}$}
\label{section:diff_wrt_m}

The partial derivative of the normalised forward put price $\hat{P}$ with respect to  the log-moneyness $m$ is given by:

\begin{equation}
\label{equation:diff_wrt_m}
    \frac{\partial \hat{P}}{\partial m} =  - \frac{1}{\pi} {\displaystyle \int^{+ \infty}_0 Re \left[e^{ (i u + \frac{1}{2})\hat{F}_{t,T}} (i u + \frac{1}{2}) \phi_{\tau}(u-\frac{i}{2}) \right]\frac{\,du}{u^2+\frac{1}{4}}}
\end{equation}

\begin{proof}
See Appendix \ref{section:demo_diff_wrt_m}.
\end{proof}
%%%%%%%%%%%%%%%%%%%%%%%%%%%%%%%
\subsubsubsection{Partial derivative $\frac{\partial \hat{P}}{\partial v_0}$}
\label{section:diff_wrt_v0}

The partial derivative of the normalised forward put price $\hat{P}$ with respect to  the initial variance $v_0$ is given by:

\begin{equation}
\label{equation:diff_wrt_v0}
    \frac{\partial \hat{P}}{\partial v_0} =  - \frac{1}{\pi} {\displaystyle \int^{+ \infty}_0 Re \left[e^{ (i u + \frac{1}{2})\hat{F}_{t,T}} B(u-\frac{i}{2}) \phi_{\tau}(u-\frac{i}{2}) \right]\frac{\,du}{u^2+\frac{1}{4}}}
\end{equation}

\begin{proof}
See Appendix \ref{section:demo_diff_wrt_v0}.
\end{proof}
%%%%%%%%%%%%%%%%%%%%%%%%%%%%%%%%%%%%%%%%%%%%%%%%%%%%%%%%%

%%%%%%%%%%%%%%%%%%%%%%%%%%%%%%%
\subsubsubsection{Partial derivative $\frac{\partial \hat{P}}{\partial r}$}
\label{section:diff_wrt_r}

The partial derivative of the normalised forward put price $\hat{P}$ with respect to  the risk-free rate $r$ is given by:

\begin{equation}
\label{equation:diff_wrt_r}
    \frac{\partial \hat{P}}{\partial r} = \tau \frac{\partial \hat{P}}{\partial m}
\end{equation}
where:
$\frac{\partial \hat{P}}{\partial m}$ is given by (\ref{equation:diff_wrt_m})

\begin{proof}
See Appendix \ref{section:demo_diff_wrt_r}.
\end{proof}
%%%%%%%%%%%%%%%%%%%%%%%%%%%%%%%%%%%%%%%%%%%%%%%%%%%%%%%%%

%%%%%%%%%%%%%%%%%%%%%%%%%%%%%%%%%%%%%%%%
\subsubsubsection{Partial derivative $\frac{\partial \hat{P}}{\partial \tau}$}
\label{section:diff_wrt_tau}

The partial derivative of the normalised forward put price $\hat{P}$ with respect to the time to maturity $\tau$ is given by:

\begin{eqnarray}
\label{equation:diff_wrt_tau}
    \frac{\partial \hat{P}}{\partial \tau} =  - \frac{1}{\pi} \biggl\{ r {\displaystyle \int^{+ \infty}_0 Re \left[e^{ (i u + \frac{1}{2})\hat{F}_{t,T}} (i u + \frac{1}{2}) \phi_{\tau}(u-\frac{i}{2}) \right]\frac{\,du}{u^2+\frac{1}{4}}} \\
    + {\displaystyle \int^{+ \infty}_0 Re \left[e^{ (i u + \frac{1}{2})\hat{F}_{t,T}} \ \frac{\partial \phi_{\tau}(u-\frac{i}{2})}{\partial \tau}\right] \frac{\,du}{u^2+\frac{1}{4}}} \biggl\} \nonumber
\end{eqnarray}

where:

\begin{equation}
\label{equation:diff_phi_tau}
\frac{\partial \phi_{\tau}(u)}{\partial \tau} = \phi_{\tau}(u)\left[\frac{\kappa \theta}{\sigma^2} \left(\beta - d - 2 \  \frac{d g e^{-d\tau}}{1-ge^{-d\tau}} \right) \\
  + v_0 \ \frac{B(u)\ d\ e^{-d\tau} (1-g) }{(1- e^{-d\tau}) (1-g e^{-d\tau})}\right]
\end{equation}

\begin{proof}
See Appendix \ref{section:demo_diff_wrt_tau}.
\end{proof}
%%%%%%%%%%%%%%%%%%%%%%%%%%%%%%%

%%%%%%%%%%%%%%%%%%%%%%%%%%%%%%%%%%%%%%%%
\subsubsubsection{Partial derivative $\frac{\partial \hat{P}}{\partial \kappa}$}
\label{section:diff_wrt_kappa}

The partial derivative of the normalised forward put price $\hat{P}$ with respect to the mean reversion speed of variance process $\kappa$ is given by:

\begin{equation}
\label{equation:diff_wrt_kappa}
    \frac{\partial \hat{P}}{\partial \kappa} =  - \frac{1}{\pi} {\displaystyle \int^{+ \infty}_0 Re \left[ e^{ (i u + \frac{1}{2}) \hat{F}_{t,T}} \ \frac{\partial \phi_{\tau}(u-\frac{i}{2})}{\partial \kappa}\right]\frac{\,du}{u^2+\frac{1}{4}}}
\end{equation}

where:

\begin{equation}
\label{equation:diff_phi_kappa}
    \frac{\partial \phi_{\tau}(u)}{\partial \kappa} = \phi_{\tau}(u) \left[
    \frac{\partial A(u)}{\partial \kappa} + v_0 \frac{\partial B(u)}{\partial \kappa} \right]
\end{equation}

with:
\begin{equation}
\label{equation:diff_A_kappa}
\frac{\partial A(u)}{\partial \kappa} = \frac{A(u)}{\kappa} - \frac{\kappa \theta}{d \sigma^2}\left[-d \tau + \tau \beta + \frac{4g}{g-1} + \frac{2\ g\ e^{-d\tau} (2+\tau \beta) }{1-g e^{-d\tau}} \right]
\end{equation}

and:
\begin{equation}
\label{equation:diff_B_kappa}
\frac{\partial B(u)}{\partial \kappa} = \frac{B(u)}{d} \left[ - 1 + \frac{\tau \ \beta \ e^{-d\tau} }{1-e^{-d\tau}} - \frac{g \ e^{-d\tau}\ (2+\tau \beta) }{1-g\ e^{-d\tau}} \right]
\end{equation}

\begin{proof}
See Appendix \ref{section:demo_diff_wrt_kappa}.
\end{proof}
%%%%%%%%%%%%%%%%%%%%%%%%%%%%%%%
\subsubsubsection{Partial derivative $\frac{\partial \hat{P}}{\partial \rho}$}
\label{section:diff_wrt_rho}

The partial derivative of the normalised forward put price $\hat{P}$ with respect to the  correlation between the variance and the underlying process $\rho$ is given by:

\begin{equation}
\label{equation:diff_wrt_rho}
    \frac{\partial \hat{P}}{\partial \rho} =  - \frac{1}{\pi} {\displaystyle \int^{+ \infty}_0 Re \left[ e^{ (i u + \frac{1}{2}) \hat{F}_{t,T}} \ \frac{\partial \phi_{\tau}(u-\frac{i}{2})}{\partial \rho}\right]\frac{\,du}{u^2+\frac{1}{4}}}
\end{equation}

where:

\begin{equation}
\label{equation:diff_phi_rho}
     \frac{\partial \phi_{\tau}(u)}{\partial \rho} = \phi_{\tau}(u) \left[
    \frac{\partial A(u)}{\partial \rho} + v_0 \frac{\partial B(u)}{\partial \rho} \right]
\end{equation}

with:
\begin{equation}
\label{equation:diff_A_rho}
\frac{\partial A(u)}{\partial \rho} = \frac{\kappa \theta i u}{d \sigma} \left \{\tau (\beta - d) - 2 g \left[ \frac{e^{-d\tau} (2+\tau \beta)}{g e^{-d\tau} - 1} - \frac{2}{g-1} \right] \right \}
\end{equation}

and:
\begin{equation}
\label{equation:diff_B_rho}
\frac{\partial B(u)}{\partial \rho} = \frac{i u \sigma}{d} \ B(u) \left[ 1+ e^{-d\tau}\left(\frac{-\tau \beta}{1 - e^{-d\tau}} + \frac{g(2 + \tau \beta)} {1-g e^{-d\tau}} \right) \right]
\end{equation}

\begin{proof}
See Appendix \ref{section:demo_diff_wrt_rho}.
\end{proof}

%%%%%%%%%%%%%%%%%%%%%%%%%%%%%%%
%%%%%%%%%%%%%%%%%%%%%%%%%%%%%%%
\subsubsubsection{Partial derivative $\frac{\partial \hat{P}}{\partial \sigma}$}
\label{section:diff_wrt_sigma}

The partial derivative of the normalised forward put price $\hat{P}$ with respect to the volatility of the variance $\sigma$ is given by:

\begin{equation}
\label{equation:diff_wrt_sigma}
    \frac{\partial \hat{P}}{\partial \sigma} =  - \frac{1}{\pi} {\displaystyle \int^{+ \infty}_0 Re \left[ e^{ (i u + \frac{1}{2}) \hat{F}_{t,T}} \ \frac{\partial \phi_{\tau}(u-\frac{i}{2})}{\partial \sigma}\right]\frac{\,du}{u^2+\frac{1}{4}}}
\end{equation}

where:

\begin{equation}
\label{equation:diff_phi_rho_1}
     \frac{\partial \phi_{\tau}(u)}{\partial \sigma} = \phi_{\tau}(u) \left[
    \frac{\partial A(u)}{\partial \sigma} + v_0 \frac{\partial B(u)}{\partial \sigma} \right]
\end{equation}

with:
\begin{equation}
\label{equation:diff_A_sigma}
\frac{\partial A(u)}{\partial \sigma} = \frac{-2\ A(u)}{\sigma} + \frac{\kappa \theta}{\sigma^2}\left \{-i u \tau \rho + \frac{\tau}{d} (i u \rho \beta + 2 \hat{\alpha} \sigma) - 2 \ \frac{\partial}{\partial \sigma}\left[ \ln \left (\frac{g e^{-d\tau} -1} {g-1} \right) \right] \right \}
\end{equation}
where:

\begin{multline}
\label{equation:diff_ln_sigma}
\frac{\partial}{\partial \sigma} \left[ \ln \left(\frac{g e^{-d\tau} - 1}{g - 1} \right) \right] = \\
\\
\frac{\left[2 i u \rho \left(\beta^2 - d^2 \right) + 4 \beta \hat{\alpha} \sigma \right] \left[(g - 1) e^{-d\tau} - g e^{-d\tau} + 1 \right] + \tau g (g - 1) e^{-d\tau} (\beta + d)^2  (i u \rho \beta + 2 \hat{\alpha} \sigma) }{d \ (\beta + d)^2 (g - 1) \left(g e^{-d\tau} - 1 \right)}
\end{multline}

and:

\begin{equation}
\label{equation:diff_B_sigma}
\frac{\partial B(u)}{\partial \sigma} = \left( \frac{1 - e^{-d\tau}}{1 - g e^{-d\tau}} \right)  \frac{\partial}{\partial \sigma} \left( \frac{\beta-d}{\sigma^2} \right) + \frac{\beta-d}{\sigma^2} \  \frac{\partial}{\partial \sigma} \left(\frac{1 - e^{-d\tau}}{1 - g e^{-d\tau}} \right)
\end{equation}

where:

\begin{equation}
\label{equation:diff_beta_d_sigma}
\frac{\partial}{\partial \sigma} \left(\frac{\beta - d}{\sigma^2} \right) = \frac{\left[\sigma (- i u \rho d + i u \rho \beta + 2 \hat{\alpha} \sigma) - 2 d (\beta - d) \right] \left(1 -e^{-d\tau} \right)}{d \  \sigma^3 \left(1 - g e^{-d\tau} \right)}
\end{equation}

\begin{multline}
\label{equation:diff_int_sigma}
\frac{\partial}{\partial \sigma} \left(\frac{1 - e^{-d\tau} }{1 - g e^{-d\tau}} \right) = \\
\\
\frac{e^{-d\tau} \left[i u \rho \beta + 2 \hat{\alpha} \sigma \right] \left[ - \tau (\beta + d)^2 (1 - g e^{-d\tau}) + (1 - e^{-d\tau}) (2 \beta + \tau g (\beta + d)^2) \right] - 2 i u \rho d^2 e^{-d\tau} \left(1 - e^{-d\tau} \right)} {d  \ (\beta + d)^2 \left(1 - g e^{-d\tau} \right)^2}
\end{multline}

\begin{proof}
See Appendix \ref{section:demo_diff_wrt_sigma}.
\end{proof}

%%%%%%%%%%%%%%%%%%%%%%%%%%%%%%%

\section{Practical Implementation and Numerical Results}
\label{chap:implementation_numerical_results}

DML allows for fast training and accurate pricing. In the following, we start by constructing the Heston model in section~\ref{section:construct_heston}. Next, we describe the generator used to build the datasets utilised to train, validate, and test the neural networks in section~\ref{section:data_genration}. Section~\ref{section:data_handling} presents the data handling as well as the selection of the dataset. In section~\ref{section:traing_performance}, we compare the DML performance to the classical DL (without differentiation) one in the case of FFNN. We show that the DML outperforms the DL.
Furthermore, we introduce different regularisation techniques and apply them notably in the case of the DML in sections~\ref{section:traing_performance} and~\ref{section:regul_dml}. We compare their performance in reducing overfitting and improving the generalisation error. Finally, We perform a Hyperparameters tuning, select the best model and examine its performance in section~\ref{section:hype_tun}. Note that we provide a realistic and thorough evaluation of the obtained results in sections~\ref{section:data_handling},~\ref{section:traing_performance},~\ref{section:regul_dml} and~\ref{section:hype_tun}.

\subsection{Heston Pricer and Sensitivities} 
\label{section:construct_heston}

We start by constructing the Heston model. The semi-analytical formulas for the valuation of European options, in particular the normalised forward put price, as well as for its differentials with respect to inputs, involve complex integrals that can be calculated using several numerical methods by considering the truncation of the integral from a certain threshold. We use, in this work, the function \emph{quad\_vec} from the \emph{scipy.integrate} package in Python. It allows accurate computation of these integrals.\\

According to proposition \ref{proposition:bs_particular_case}, the Black-Scholes model is a particular case of the Heston one. So, to check our Heston prices implementation, we price European call and put options under both Black-Scholes and Heston models with the parameter values leading to Black-Scholes prices (i.e. satisfying (\ref{equation:bs_particular_case})) and obtain comparable prices. Note that we cannot substitute $\sigma = 0$ into the Heston pricing functions because that will lead to division by zero. We take small values of $\sigma$ to avoid this issue in the implementation.\\

We make a sanity check of our implementation of the semi-closed formulas for the different partial derivatives derived in section \ref{section:diff_p_prime_inputs}: We calculate them and find results similar to those obtained by performing finite differences. Recall that a finite difference is used as an approximation of the derivative in numerical differentiation. Indeed, the derivative of a function $f$ at a point $x$ is defined by:

\begin{equation*}
    {f'(x)=\lim _{h\to 0}{\frac {f(x+h)-f(x)}{h}}}
\end{equation*}

\subsection{Data Generation}
\label{section:data_genration}

This section describes the generator used to build the datasets utilised to train, validate, and test the neural networks.\\
 
Firstly, we generate synthetic dataset for the input parameters for the model ($\theta$, $\kappa$, $\sigma$, $\rho$ and $v_0$), the option and the market ones, which include $m$, $\tau$ and $r$ within appropriate ranges selected for each parameter. As mentioned in section~\ref{section: ANN_approach}, we use synthetic rather than historical data because it enables the neural networks to generate accurate results at future market states that might not have already happened.\\

In statistics and in machine learning, the i.i.d.~\footnote{independent and identically distributed} assumption about the features and labels $(x_i,y_i), i=1,\ldots,m$, is often made. Thus the data needs to be generated by a pseudo-random method such as simple Monte Carlo (MC), and the $m$ observations with respect to inputs need to be equally weighted when training the neural network. Moreover, the presence of clustering causing sparsity in the sampling is a source of important issues in the neural network training phase by not being able to cover all the value space. This becomes more significant when the input range is extensive, which is the case for time to maturities, with values from 0.05 to 20 years. Median Latin Hypercube (MLH) or Random Latin Hypercube (RLH) Sampling can prevent such issues. Median and Random Latin Hypercube methods rely on importance sampling to better spread the uniform random variables over the unity cube. Quasi-MC generators such as Sobol give low-discrepancy deterministic sequences that even better spread the samples. Still, the i.i.d property is lost, and special attention needs to be paid when applying it with stochastic gradient descent optimisation methods. In this work, we opt for the RLH approach to generate the parameters.

\subsubsubsection{Latin Hypercube Sampling}
\label{section:LHS}

Latin hypercube sampling (LHS) is a statistical method for generating random samples of parameter values with equal intervals.\\

The method performs the sampling by ensuring that each sample is positioned in a space $\Omega$ of dimension ${\displaystyle d}$ as the only sample in each hyperplane of dimension ${\displaystyle d-1}$ aligned on the coordinates which define its position. Each sample is therefore positioned according to the position of the samples positioned previously to ensure that they do not have common coordinates in the space $\Omega$. Thus, this method requires knowing the position of the samples already positioned.\\

The LHS method is entirely accepted because it presents optimisable non-collapsing space-filling properties and is flexible regarding data density and location. Hence, we decide to generate the parameters using the LHS. Table \ref{tab:sampling_range} summarises the sampling range of the model, option and market parameters.\\

\begin{table}[!htbp]
    \centering
   \begin{adjustbox}{width=0.8\textwidth}
    \begin{tabular}{|l|l|l|}
    \hline
    %  DNN   &  
     Parameters & Value Range & Generating Method\\
     \hline
    %   \multirow{8}{*}{DNN Input} &
      Log-moneyness, $m$ & [-2,2] & LHS\\
    %   &
       Time to maturity, $\tau$ & [0.05,20](years) & LHS\\
    %   & 
      Risk-free rate, $r$ & [-0.01,0.10] & LHS\\
    % &
    Mean reversion speed, $\kappa$ & [0.005,3] & LHS \\
    %   &
      Initial variance, $v_0$ & [0,1] & LHS \\
    %   & 
      Long-term mean-variance, $\theta$ & [0,1] & LHS\\
    %   & 
      Volatility of variance, $\sigma$ & [0.1,2] & LHS \\
    %  & 
     Correlation, $\rho$ & [-0.90,0] & LHS\\
        \hline
%   DNN Output & The normalised forward put price, $\hat{P}$ & - & Heston formula\\
%         \hline
       
    \end{tabular}
    \end{adjustbox}
    \caption{Sampling range for the parameters}
    \label{tab:sampling_range}
\end{table}

\newpage
Secondly, using the semi-closed pricing formulas derived in sections~\ref{section:nfpp} and~\ref{section:diff_p_prime_inputs}, we compute the corresponding option prices, which are considered as outputs, and the differential of prices with respect to inputs that are used as additional data. We generate 100~000 \ (100K) samples in two cases: in the first case, the Feller condition is satisfied, and in the second, it can be breached. \\ 

The data generation is computationally intensive. It takes 9 hours, 14 minutes and 51 seconds on a CPU server when the feller condition is satisfied and takes 10 hours, 34 minutes and 24 seconds when this condition can be breached. Nevertheless, this procedure is performed once. The learning stage on this synthetic data of the neural network takes a few minutes.\\

\subsection{Data Handling}
\label{section:data_handling}

\subsubsection{Data Splitting}
We randomly split the dataset into training, validation and test sets with a ratio of 80\% for training, 10\% for validation and 10\% for testing. \\

\subsubsection{Data Normalisation}

Normalisation is a technique applied as part of data preparation in DL. Data normalisation makes it possible to overcome the differences in the "norms" of the variables. Indeed, variables with large values can have a greater influence than variables with small values without being more significant. The normalisation aims to change data values to a joint scale without distorting differences or losing information. The most straightforward and most widely used technique treats each variable independently and calculates for each variable $x_j$ its mean value $\overline{x}_j$ and its standard deviation $\sigma_j$. Thus, the normalisation of the observation $x_i$ of this variable, identified by the element $x_{ij}$ of $X$, is normalised ($\tilde{x}_{ij}$) by the following expression:
\begin{equation*}
    \tilde{x}_{ij} = \frac{x_{ij} - \overline{x}_j}{\sigma_j}
\end{equation*}

The result of this normalisation on the set of observations $x_{ij}, i = 1, ..., m$ of the variable makes it possible to obtain a distribution of this variable having as properties a mean value of zero and a variance of one. In the literature, this transformation can be observed under the name of reduced centred transformation or even statistical normalisation.\\

We normalise the training data. When dealing with DL, the function performs normalisation for inputs and labels. In the case of DML, this function performs normalisation for inputs, labels and their derivatives. A reduced-centred transformation is applied to the inputs and the labels. Inspired by Huge and Savine~\cite{differential_ML}, differential labels are adjusted by multiplication of the standard deviations of inputs and division by the standard deviation of labels.

We apply the same transformation on the validation set, i.e. using the mean and standard deviation of the training dataset.

Because the model expects normalised inputs, we normalise the inputs of the test set by subtraction of the mean and division by the standard deviation of the inputs of the training set. When comparing predictions to the exact price, we scale back predictions to original units by multiplying them by the standard deviation and then adding the mean of the labels of the training set.\\

\subsubsection{Selection of the Dataset}

\subsubsubsection{Selection According to the Feller Condition}
The generalisation error when the Feller condition is satisfied is better than when this condition can be potentially breached, as we can see in Table~\ref{tab:feller_vs_vofeller}, where we provide the MSE on the test set between the exact and predicted prices when the Feller condition is satisfied and when it is breached for a FFNN with (respectively without) differentiation detailed in the next section. Note that all values are expressed in basis points (bp): 1bp=$10^{-4}$. 

\begin{table}[!htbp]
    \centering
    %\resizebox{12cm}{!}
    \begin{adjustbox}{width=0.8\textwidth}  
    \begin{tabular}{|c|c|c|}
    \hline
     \diagbox{Condition}{Network} & FFNN without differentiation & FFNN with differentiation\\
     \hline
     Feller condition satisfied &  2.60 & 0.44 \\
    \hline 
      Feller condition breached &  3.74 & 0.50\\
       
        \hline
    \end{tabular}
    \end{adjustbox}
    \caption{MSE (in bp) on the test set for a FFNN with (resp.without) differentiation when the Feller condition is satisfied and when it is breached.}
    \label{tab:feller_vs_vofeller}
\end{table}

This result confirms that the Heston model behaviour is sensitive to the Feller condition. The fact that the model parameters do not follow this condition may result in numerical instabilities while pricing options because volatility can reach null value causing the deterioration of the network performance.\\

In the sequel, we will work with the dataset generated when the feller condition is satisfied.

\subsubsubsection{Selection of the Dataset Size}

According to Table~\ref{tab:100K_vs_16k}, the generalisation error of the differential network trained on 16K examples is lower than when the network is trained on the whole dataset (i.e. 100K samples) since the MSE on the test set is 0.44 (respectively 2.66 ) bp with 16K (respectively 100K) samples. Unlike performing classical DL, the more samples in the dataset, the better the result. Moreover, The MSE on the differential network trained on 16K examples is 0.44 bp, lower than the MSE on the classical network with 6 times the training dataset size, which equals 1.73. This result illustrates the ability of differential networks to learn efficiently from small datasets showing one of their benefits.\\

Training with fewer samples reduces the computational time considerably and gives a good generalisation performance in the case of FFNN with differentiation; we then decide to retain a dataset of 16 384 samples. 

\begin{table}[!htbp]
    \centering
    \begin{adjustbox}{width=0.8\textwidth}  
    \begin{tabular}{|c|c|c|}
    \hline
    \diagbox{Dataset size}{Network} &  FFNN without differentiation & FFNN with differentiation\\
     \hline
     100 K & 1.73 & 2.66 \\
    \hline 
      16 K & 2.60 & 0.44 \\
        \hline
    \end{tabular}
    \end{adjustbox}
    \caption{MSE (in bp) on the test set for different dataset sizes in the case of a classical FFNN and a differential FFNN.}
    \label{tab:100K_vs_16k}
\end{table}

\subsection{Training Performance}
\label{section:traing_performance}

In this section, we compare the performance of the DML to the classical DL (without differentiation) one in the case of FFNN.\\

An FFNN is defined as a Sequential model in Pytorch. It is a plain stack of layers where each layer has exactly one input tensor and one output tensor.\\

To perform our comparison, three models are trained: The first is a basic sequential model using \emph{torch.nn.Sequential}, the second is an (explicit) FFNN (built from scratch) without differentiation, and the third is an (explicit) FFNN (built from scratch) with differentiation. All models are trained using the same neural network architecture and hyperparameters setting.
Note that we implement both the classical FFNN and the Differential FFNN from scratch using the PyTorch pythonic DL framework .

\subsubsubsection{Neural Network Architecture}
\label{section:NN_architecture}
In this section, we choose an FFNN architecture that consists of $4$ hidden layers with $50$ neurons (in each layer).\\

The training of the FFNN without differentiation is performed by optimising the weights and biases on the MSE cost function $\mathbf{C}$ defined by~(\ref{equation:cost_fct}), expressed in terms of the normalised labels $\tilde{Y}$ by:

\begin{equation*}
    \mathbf{C} = \frac{1}{m}\ \sum_{i=1}^m \left[\hat{y}_i (w) - \tilde{y}_i \right]^2
\end{equation*}

The training of the DML is performed by optimising the weights and biases on the MSE cost function $\mathbf{C}$ defined by~(\ref{equation:cost_fct_dnn}), expressed in terms of the normalised labels (respectively differential labels) $\tilde{Y}$ (respectively $\widetilde{\overline{X}}$) by:

\begin{equation*}
    \mathbf{C} = \frac{1}{m}\ \sum_{i=1}^m \left[\hat{y}_i (w) - \tilde{y}_i \right]^2 + \frac{1}{m} \ \sum_{i=1}^m \sum_{j=1}^n\ \frac{1}{||\widetilde{\overline{X}_j}||^2}\ \left[\widehat{{\overline{x}}}_{ij} (w) - \widetilde{\overline{x}}_{ij} \right]^2
\end{equation*}
Note that the optimal weights and biases are obtained by minimisation of $\mathbf{C}$ in $w$.\\

The "AdamW" optimiser is used. As suggested by Huge and Savine~\cite{differential_ML}, we choose an adaptative learning rate for the optimiser. Its profile as a function of the epoch is presented in Figure~\ref{fig:lr}. The number of epochs, batches per epoch and batch size are respectively equal to 50, 16 and 819.\\

\begin{figure}[!htbp]
    \centering
    \includegraphics[width=11cm, height=5cm]{./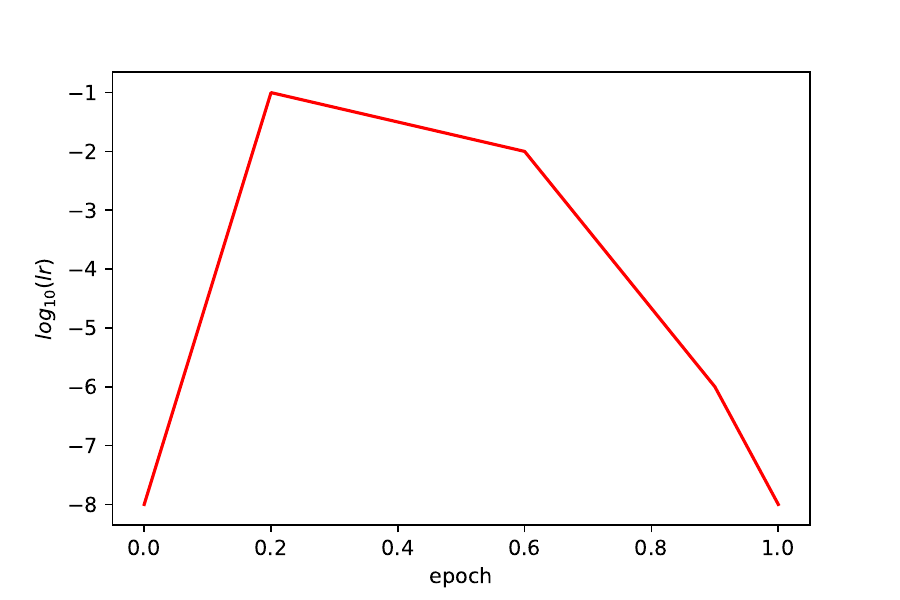}
    \caption{Base 10 logarithm of the learning rate.}
    \label{fig:lr}
\end{figure}

Each hidden layer in the FFNN without differentiation and the first half of the twin network as part of the FFNN with differentiation uses \emph{Softplus} activation functions with the final output layer a simple linear function to give a real-valued output. We choose a \emph{Softplus} activation function because it is a smooth approximation to the \emph{ReLU} activation function and allows the network to return positive values as output. 

The activation function for the second half of the twin network is the \emph{Sigmoid} (the derivative of the  \emph{Softplus} function) as part of the FFNN with differentiation. \\

The training is performed by initialising weights (respectively biases) with \emph{Kaiming} (respectively \emph{uniform}) initialisation.\\

The neural network architecture and hyperparameters are summarised in Table~\ref{tab:nn_architecture1}.

\begin{table}[!htbp]
    \centering
    \begin{adjustbox}{width=0.6\textwidth}  
    \begin{tabular}{|l|l|}
    \hline
    Parameter   &  Option\\
    \hline 
      Hidden layers & 4\\
       Neurons per layer & 50\\
     Epoch & 50 \\
     Batches per epoch & 16\\
    Batch size &  819 \\
    Learning rate & Described above\\
    Activation function for the forward pass & Softplus\\
    Activation function for the backward pass & Sigmoid\\
    Weights initialisation & Kaiming uniform \\
    Biases initialisation & Uniform \\
   Optimiser & AdamW \\
    Loss function & MSE\\
        \hline
    \end{tabular}
    \end{adjustbox}
    \caption{NNs architecture and hyperparameter setting}
    \label{tab:nn_architecture1}
\end{table}

\subsubsection{Accuracy of the Different Networks}

In order to assess the performance of the networks, firstly, we look at the evolution of training and validation losses according to the epoch. Figure~\ref{fig:loss_comparing_3models} depicts the results of the performance for training and validation. It can be seen that as the number of epochs increases, both training and validation losses decrease showing that there is a good convergence in the 3 models. \\

\begin{figure}[!htbp]
    \centering
    \includegraphics[width=15cm, height=10cm]{./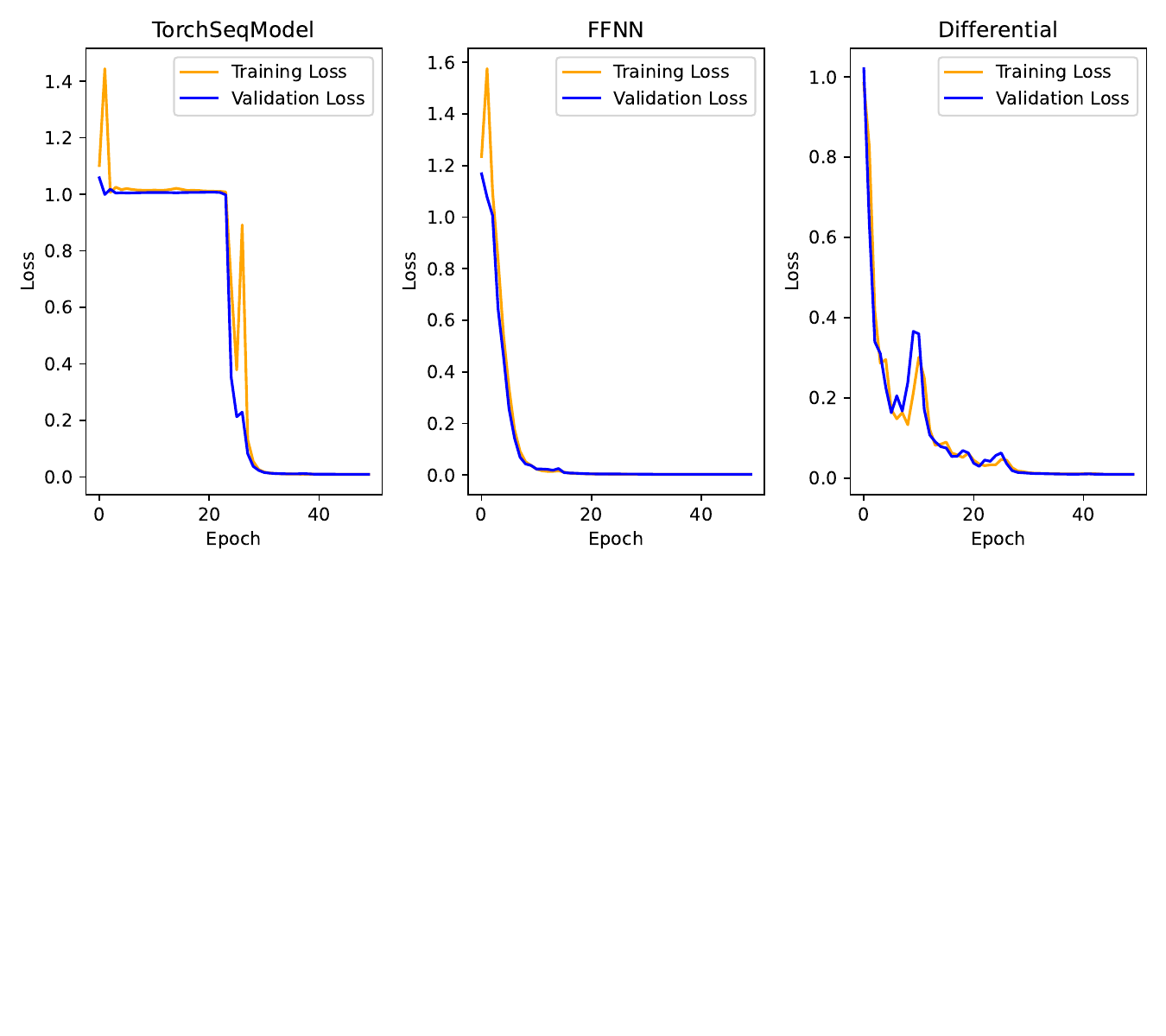}
    \caption{Training and validation losses according to the epoch for three NNs. Left: basic sequential NN, middle: classical FFNN, right: FFNN with differentiation.}
    \label{fig:loss_comparing_3models}
\end{figure}

Secondly, we test each model with the test dataset, in particular, we compare predicted prices with the exact ones. Results are displayed in Table~\ref{tab:mse_comp3models} where we display the MSE on the test set for each model. The lower the MSE, the better performance by the model. It can be seen that the FFNN model without differentiation that we implement outperforms the basic sequential model; this may be due to the fact that the initialisation of weights and biases method that we implement is better than that of the basic sequential model. Furthermore, it is clear that the DNN gives the best result and outperforms the classical NN by far since the MSE is much lower. The latter result can be explained by the presence of the penalty allocated for wrong differentials in the cost function that acts as a regulariser, as mentioned in section~\ref{section:train_twin_network} showing that differentials act as an efficient regulariser.\\

 \begin{table}[!htbp]
    \centering
    \begin{adjustbox}{width=0.4\textwidth}  
    \begin{tabular}{|l|l|l|}
    \hline
    Neural Network   &  MSE (bp)\\
    \hline 
      Basic Sequential NN & 6.29\\
       Classical FFNN & 2.60\\
     DNN & 0.44 \\
        \hline
    \end{tabular}
    \end{adjustbox}
    \caption{NNs performance on the test set}
    \label{tab:mse_comp3models}
\end{table}

\subsubsection{Regularisation Techniques}

Overfitting is a common problem when training a DL model. 
To counter this problem,  a set of methods that optimise the learning of a DL model called regularisation has been developed over the years.

\subsubsubsection{Regularisation L1 and L2}

The first frequently used techniques are L1 (Lasso) and L2 (Ridge) regularisation. These methods consist in forcing the network to have weights of small sizes. They are based on the idea that if an ideal network exists to solve the problem, it is possible to modify all the weights of this network by the same constant without changing the final result. Among this infinity of possible networks, it is preferable to choose the network with the weights closest to zero because it is the simplest.\\

To apply the L1 (respectively L2) regularisation, simply add a term to the cost function: $\mathbf{C}_1 = \mathbf{C} +  \lambda |\mathbf{W}|$ \ (respectively \ $\mathbf{C}_2 = \mathbf{C} + \lambda \mathbf{W}^2$), 
where $\mathbf{C}_1$ and $\mathbf{C}_2$ are the cost function with L1 and L2 regularisation, respectively, $\mathbf{C}$ is the initial cost function, $\mathbf{W}$ represents the network weights, and $\lambda$ is a hyperparameter to control the strength of the regularisation. \\

By limiting the size of the weights, this regularisation decreases the expressive capacity of the network. Indirectly, this also reduces the possibilities for the network to memorise the training data.\\

In this work, we implement L1 and L2 regularisations but do not activate them because we use AdamW as an optimiser.

\subsubsubsection{AdamW}

AdamW~\cite{adamw} is a variation of Adam optimiser. Both optimisers add a penalty L2 to regularise the optimised neural network weights. The only difference is that Adam defines the L2 regularisation through the cost function, whereas AdamW implements it directly into the weight update rule.\\

Another technique that we apply to the above models is called Dropout.

\subsubsubsection{Dropout}
\textbf{\emph{Technique Presentation}}\\

Dropout~\cite{dropout} is a technique to reduce overfitting when training the model.
The term \emph{dropout} refers to the removal of neurons in the layers of a DL model.
In fact, specific neurons are temporarily deactivated in the network, as well as all its incoming and outgoing connections, as illustrated in ﬁgure~\ref{fig:dropout_architecture}.

\begin{figure}[!htbp]
    \centering
    \includegraphics[width=12cm, height=5cm]{./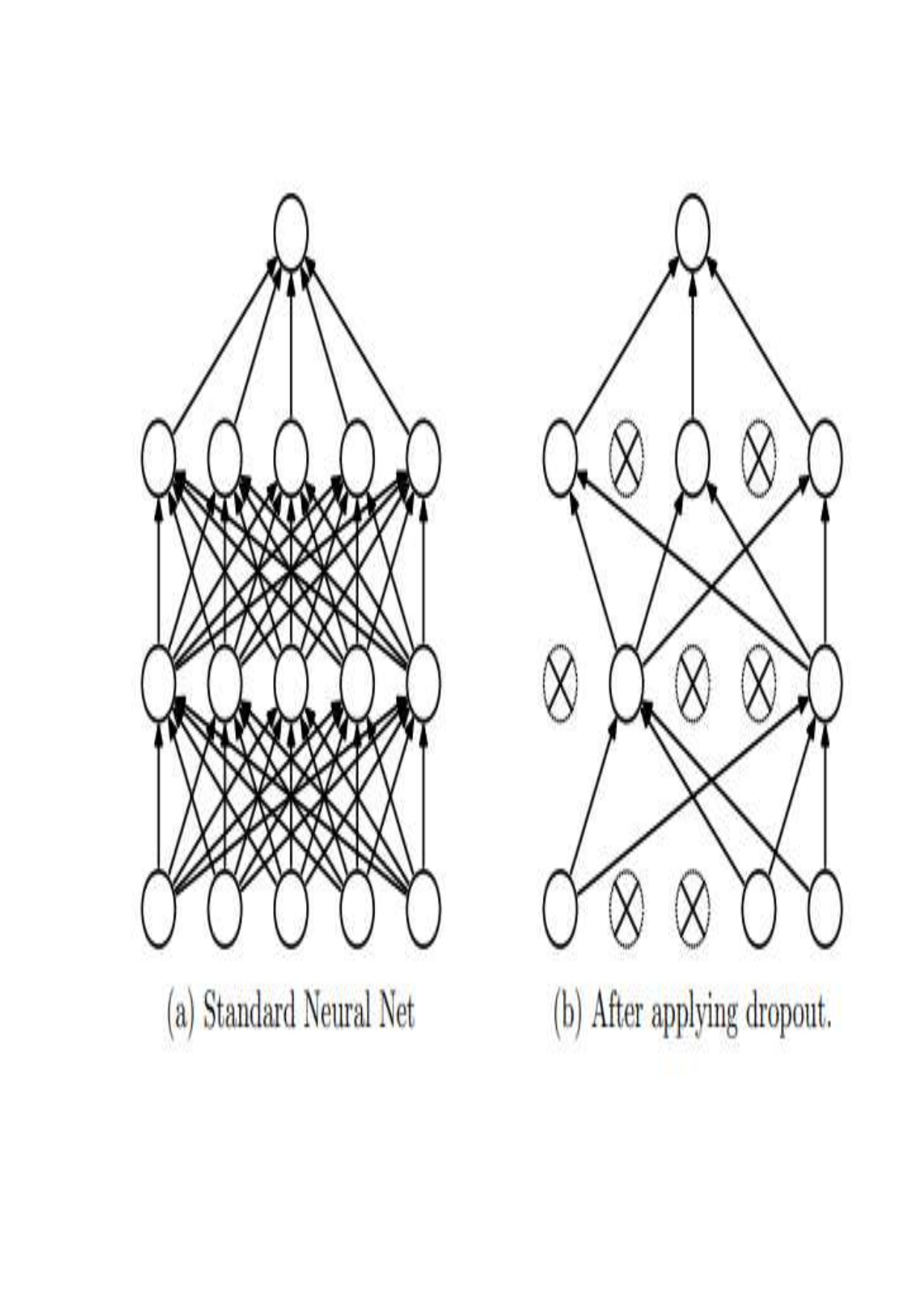}
    \caption{Dropout architecture  as presented in~\cite{dropout}. Left: a neural network with 2 hidden layers, right: the network after applying dropout to the network on the left.}
    \label{fig:dropout_architecture}
\end{figure}

The choice of neurons to be deactivated is random. A probability $p$ is assigned to all the neurons, which determines their activation. When $p = 0.1$, each neuron has a 1 in 10 chance of being deactivated. At each epoch,  this random deactivation is applied. At each pass (forward propagation), the model will learn with a configuration of different neurons; the neurons activate and deactivate randomly. This procedure effectively generates slightly different models with different neuron configurations at each iteration. The idea is to disturb the characteristics learned by the model. Usually, the model's training relies on the synchronicity of the neurons with the dropout; the model must exploit each neuron individually, its neighbours being able to be deactivated randomly at any time. However, the dropout is active only during the training of the model. During the tests, each neuron remains active, and its weight is multiplied by the probability $p$.\\

When implementing the dropout regularisation method to perform DL or DML (via a from scratch implementation), one has (only when the training mode is active) to generate a mask vector according to a Bernoulli distribution (with p = dropout rate) and then normalise (i.e. divide) it by $1-p$ for each layer (impacted by the dropout technique) in the forward pass. The backward pass providing the differential of labels will use the masks generated in the forward mode.\\

\textbf{\emph{Numerical Results}}\\

Figure~\ref{fig:loss_dropout} displays the evolution of training and validation losses before and after applying dropout for each model. We note that there is a better convergence of both training and validation losses after applying dropout in the case of the basic sequential NN because both losses decrease as the number of epochs increases while both losses fluctuate a little before applying dropout indicating that dropout reduces overfitting. Despite both training and validation losses decrease after applying dropout in the case of classical FFNN and differential FFNN, they are more stable before applying dropout showing that the dropout enlarges overfitting.\\

Table~\ref{tab:mse_dropout} showsthe MSE on the test set for each model before and after applying dropout. We observe that applying dropout to the basic sequential model improves the generalisation error, but this error remains higher than that of both FFNN models with and without differentiation before applying dropout. In addition, we notice that dropout deteriorates the performance of both FFNN models with and without differentiation since the MSE after applying dropout is higher than before applying it.\\ 

\begin{figure}[!htbp]
    \centering
    \includegraphics[width=15cm, height=10cm]{./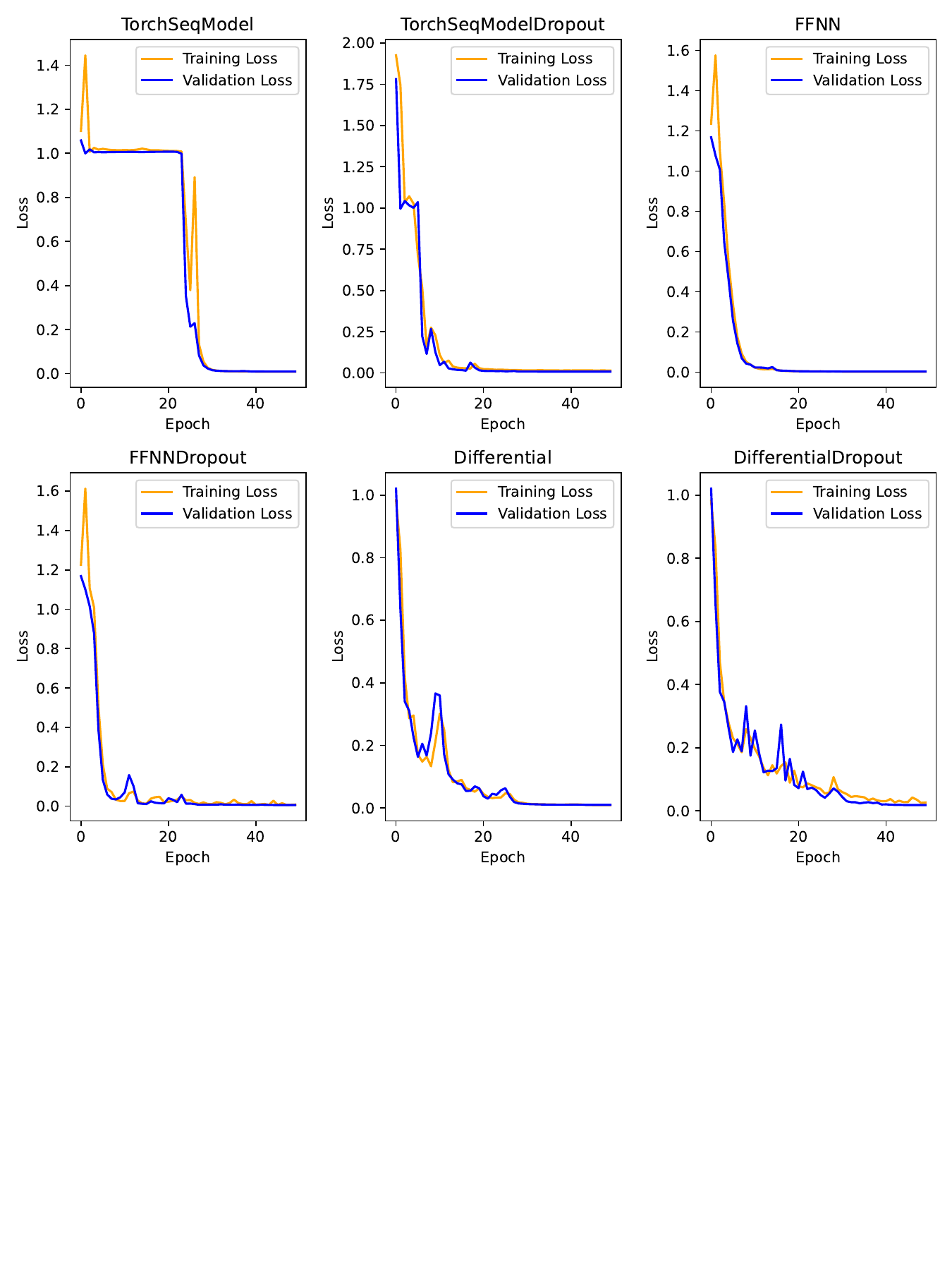}
    \caption{Training and validation losses according to the epoch for different NNs. Top left: basic sequential NN, top middle: basic sequential after applying dropout, top right: classical FFNN, bottom left: classical FFNN after applying dropout, bottom middle: FFNN with differentiation, bottom right:  FFNN with differentiation after applying dropout.}
    \label{fig:loss_dropout}
\end{figure}

 \begin{table}[!htbp]
    \centering
    \begin{adjustbox}{width=0.5\textwidth}  
    \begin{tabular}{|l|l|l|l|l|l|}
    \hline
    Neural Network   &  MSE (bp)\\
    \hline 
      Basic Sequential NN & 6.29\\
    basic sequential NN after applying dropout & 3.89 \\
    FFNN & 2.60 \\
    FFNN after applying dropout & 4.82 \\
     DNN & 0.44 \\
     DNN after applying dropout & 0.72 \\
        \hline
    \end{tabular}
    \end{adjustbox}
    \caption{NNs performance on the test set}
    \label{tab:mse_dropout}
\end{table} 

The results of this section indicate that DML outperforms traditional DL vastly and that dropout deteriorates the performance of the network. In the sequel, we will focus on DML.

\subsection{Regularisation Techniques Applied to DML}
\label{section:regul_dml}

In order to control the complexity of the differential network and enhance the generalisation performance and the learning process speed, we apply different regularisation techniques and compare their performances in reducing overfitting and improving the generalisation error. 

\subsubsection{Differential Wide \& Deep Learning}
\label{section:diff_wide_deep}
\subsubsubsection{Technique Presentation}

We train the network with a particular architecture introduced by Google as part of the recommender systems called "Wide \& Deep"~\cite{wide_and_deep}. This concept combines the strengths of memorisation and generalisation. It jointly trains wide linear models for memorisation alongside deep neural networks for generalisation. \\

The wide component accepts a set of features and feature transformations to remember feature interactions. Moreover, with less feature engineering, the deep component is generalised to invisible feature combinations. The wide part and the deep part are concatenated by summing their eventual log odds as a prediction, which eventually comes together in a single running loss function for the joint formation. \\

In the context of DML, as illustrated in figure~\ref{fig:wideanddeep_architecture}, Huge and Savine~\cite{differential_ML} showed that the output layer is a linear regression on the combination of the last hidden layer of the deep neural network $z_{L}$ and the wide layer, which is a set of fixed functions $g$:

\begin{equation*}
\hat{h} (x; \mathbf{W}) = z_{L} (x; \mathbf{W}_{hidden}) \mathbf{W}_{deep} + g(x) \mathbf{W}_{wide} 
\end{equation*}
Where $\hat{h}$ is an approximation of the target function $h$, $x$ are the inputs, and $\mathbf{W}$ are the learnable weights. Thus the differentials of predictions with respect to inputs are:

\begin{equation*}
\frac{\partial \hat{f}(x;\mathbf{W})}{\partial x_j} = \frac{\partial z_{L}(x;\mathbf{W}_{hidden})}{\partial x_j} \mathbf{W}_{deep} + \frac{\partial g(x)}{x_j} \mathbf{W}_{wide}
\end{equation*}

where $\frac{\partial z_{L}}{\partial x}$ are the differentials of the deep network and $\frac{\partial g}{\partial x}$ are the derivatives of $g$. Therefore, the wide \& deep architecture can be implemented in DML as follows: 

\begin{itemize}
    \item Adding a regression term $g(x) \mathbf{W}_{wide}$ to the output of the deep neural network.
    \item Adjusting the gradients of output with respect to inputs that are computed by backpropagation through the deep neural network by the term $\frac{\partial g(x)}{\partial x_j} \mathbf{W}_{wide}$. 
\end{itemize}

\begin{figure}[!htbp]
    \centering
    \includegraphics[width=12cm, height=12cm]{./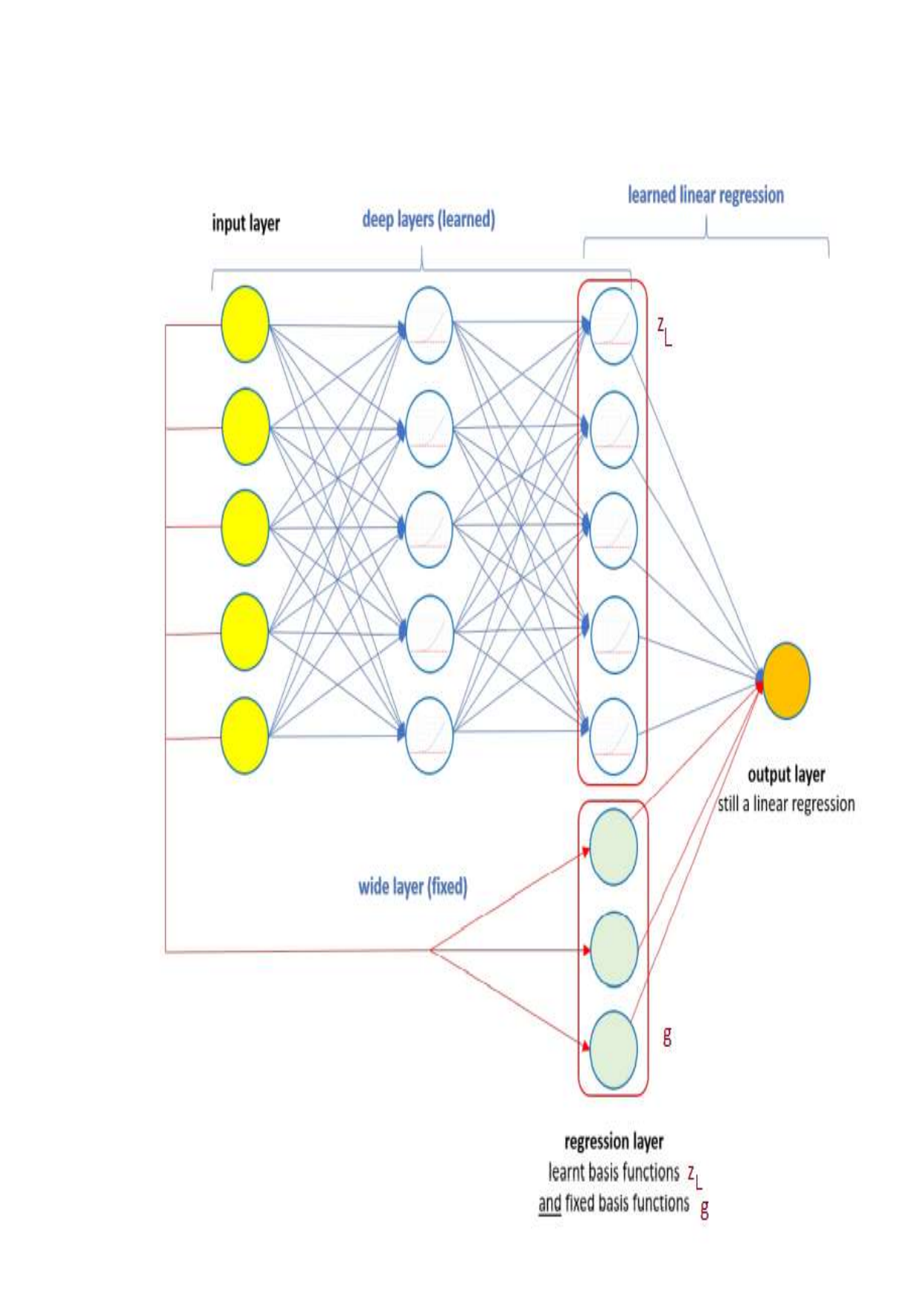}
    \caption{Wide \& deep architecture; as presented in~\cite{differential_ML}}
    \label{fig:wideanddeep_architecture}
\end{figure}

\subsubsubsection{Numerical Results}

Figure~\ref{fig:loss_wideanddeep} represents the evolution of training and validation losses according to the epoch for both differential deep and differential wide \& deep networks. We can see that, in the case of the differential wide \& deep network, overfitting does not occur. In addition, both losses converge faster towards zero and are more stable than in the case of the differential deep network meaning that the wide \& deep architecture reduces overfitting and improves the training of the network.\\

Table~\ref{tab:mse_wideanddeep} illustrates the MSE on the test set for both differential deep and differential wide \& deep networks. We notice that the wide \& deep architecture improves the generalisation error.

\begin{figure}[!htbp]
    \centering
    \includegraphics[width=14cm, height=8cm]{./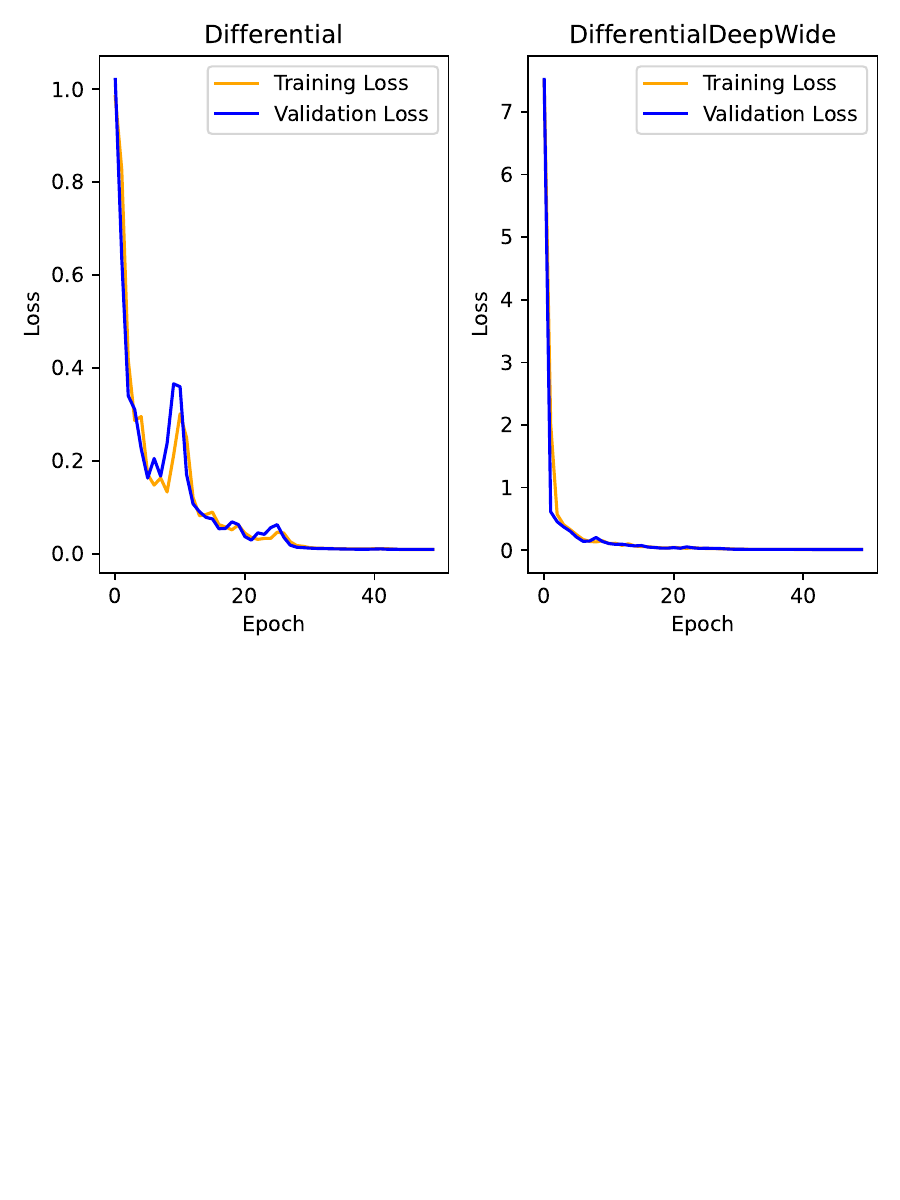}
    \caption{Training and validation losses according to the epoch. left: Differential network, right: Differential wide \& deep network.}
    \label{fig:loss_wideanddeep}
\end{figure}

\begin{table}[!htbp]
    \centering
    \begin{adjustbox}{width=0.5\textwidth}  
    \begin{tabular}{|l|l|}
    \hline
    Neural Network   &  MSE (bp)\\
    \hline       
     DNN & 0.44 \\
     Differential wide \& deep network & 0.38 \\
        \hline
    \end{tabular}
    \end{adjustbox}
    \caption{DNN performance on the test set}
    \label{tab:mse_wideanddeep}
\end{table} 

% \begin{figure}[!htbp]
%     \centering
%     \includegraphics[width=14cm, height=8cm]{./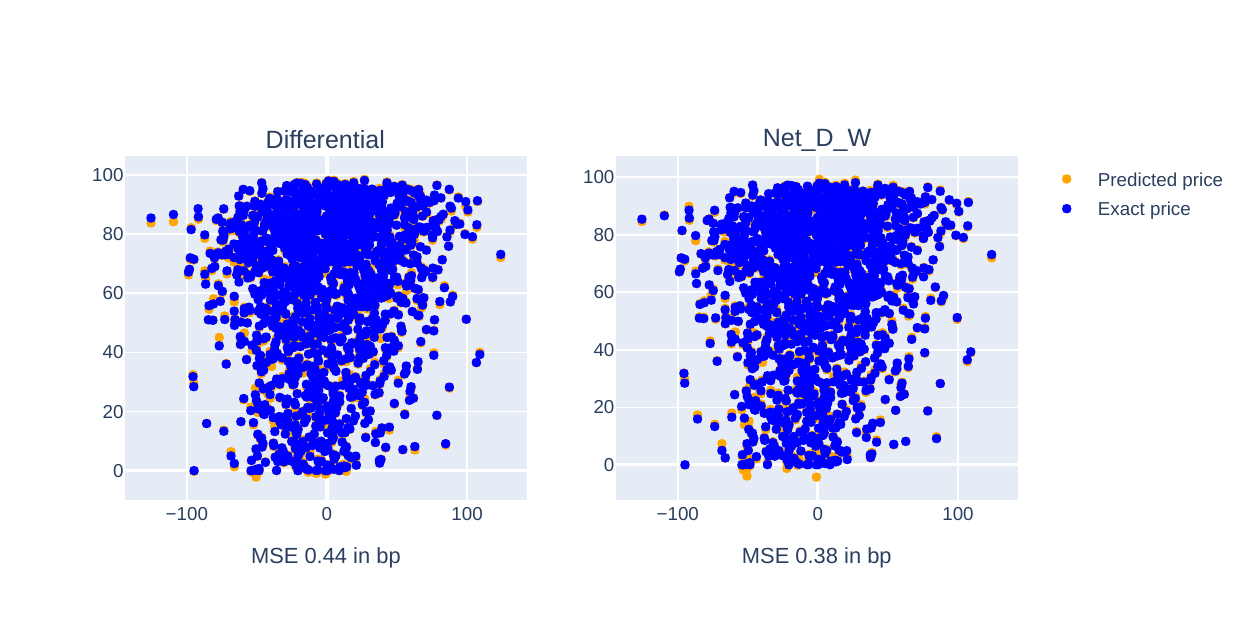}
%     \caption{DNN performance on the test set. Left: Differential network, right: Differential wide \& deep network.}
%     \label{fig:mse_wideanddeep}
% \end{figure}

\subsubsection{Gradient Clipping}
\label{section:exploding_grad}

We notice a problem with exploding gradients. This problem refers to a significant increase in the norm of the gradient during training that goes, for instance, from 0.6988 to 314.8369 and even reaches 1107.8723 when the seed is 3283. Long-term constituents may grow exponentially more than short-term ones and cause an explosion that produces such phenomenons. This generates an unstable network that cannot learn from the training data, making the gradient descent step difficult to execute. Poor performance of the DML is achieved in this case. Furthermore, we notice that the performance deteriorates as the norm of the gradient increases. Indeed, the MSE on the test set is equal to 674.53 bp when the gradient norm reaches 1107.8723 for a seed equals 3283, whereas the MSE is less significant (equals to 4.81) when the gradient norm reaches its highest at 17.7358 for a seed equals to 17653.\\

To prevent the problem of exploding gradients in deep networks, a popular technique is to clip them during backpropagation  so that they never exceed a certain threshold.  It is called gradient clipping.  We apply this technique to our DNN.\\ 

In Figure~\ref{fig:loss_gradclip}, we represent the evolution of training and validation losses according to the epoch for the DNN before and after applying gradient clipping and the differential wide \& deep network. We can see that, in the case of the differential wide \& deep network, both losses converge faster towards zero and are more stable than in the case of the differential deep network before and after applying gradient clipping.  Nevertheless, this does not imply that the differential wide \& deep network provides a better generalisation performance than the differential deep network after applying gradient clipping.\\

Indeed, Table~\ref{tab:mse_gradclip}, where we illustrate the MSE on the test set for DNN before and after applying gradient clipping and differential wide \& deep network, demonstrates that gradient clipping technique improves much better the generalisation error than the wide \& deep architecture, the MSE of the DNN is even divided by 2.

\begin{figure}[!htbp]
    \centering
    \includegraphics[width=14cm, height=7cm]{./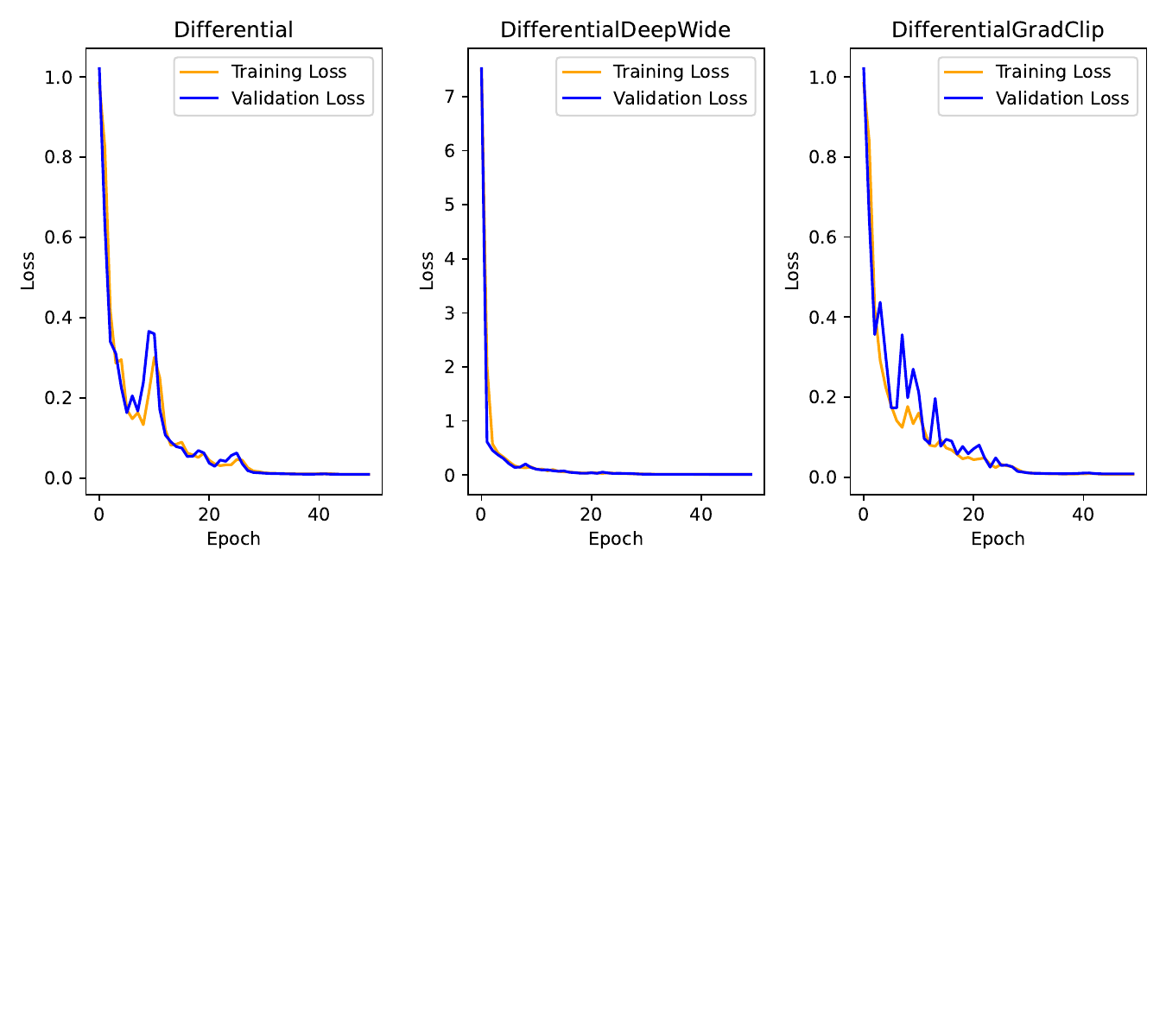}
    \caption{DNN training and validation losses according to the epoch. Left: DNN, middle: Differential wide \& deep network, right: DNN after applying gradient clipping.}
    \label{fig:loss_gradclip}
\end{figure}

\begin{table}[!htbp]
    \centering
    \begin{adjustbox}{width=0.5\textwidth}  
    \begin{tabular}{|l|l|l|}
    \hline
    Neural Network   &  MSE (bp)\\
    \hline       
     DNN & 0.44 \\
     Differential wide \& deep network & 0.38 \\
     DNN after applying gradient clipping & 0.20 \\
     
        \hline
    \end{tabular}
    \end{adjustbox}
    \caption{DNN performance on the test set}
    \label{tab:mse_gradclip}
\end{table} 

% \begin{figure}[!htbp]
%     \centering
%     \includegraphics[width=14cm, height=7cm]{./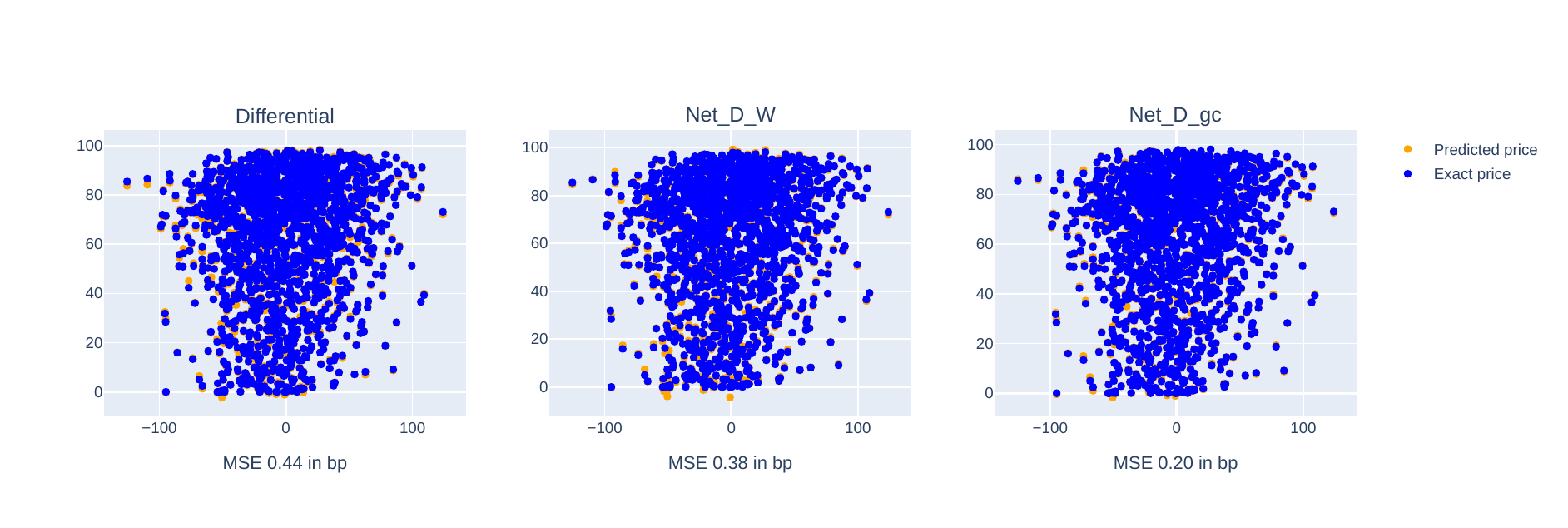}
%     \caption{DNN performance on the test set. Left: DNN, middle: Differential wide \& deep network, right: DNN after applying gradient clipping.}
%     \label{fig:mse_gradclip}
% \end{figure}

\subsubsection{Early Stopping}
\subsubsubsection{Technique Presentation}

Another method of regularisation widely used is \emph{early stopping}. Indeed, unlike the previous techniques, this technique does not modify or improve the network topology but stops the network's training when the model starts to overfit the training data. It is based on the periodic evaluation of the generalisation performance of the neural network during its learning phase and observations not used during learning. Thus during learning, validation observations make it possible to observe the generalisation error simultaneously with the learning error. Thus, if the loss of the validation set begins to increase while the loss of the training set continues to decrease, we can then consider that the network is entering the overfitting zone. In order to avoid it, it suffices to stop learning at the moment when the two losses diverge.

\subsubsubsection{Numerical Results}

In Figure~\ref{fig:loss_earlystop}, we represent the evolution of training and validation losses according to the epoch for DNN before and after applying early stopping, DNN after applying gradient clipping and after applying both gradient clipping and early stopping and differential wide \& deep network before and after applying early stopping. In Table~\ref{tab:mse_earlystop}, we illustrate the MSE on the test set for DNN before and after applying early stopping, DNN after applying gradient clipping and after applying both gradient clipping and early stopping and differential wide \& deep network before and after applying early stopping.\\

According to these figures, the behaviour of the losses of the different networks is the same after applying early stopping. In addition, the early stopping technique does not improve the generalisation error. This behaviour is expected in the case of the differential wide \& deep network because there is no overfitting, as shown in section~\ref{section:diff_wide_deep}. Regarding DNN and DNN with gradient clipping, this may be due to the choice of the patience value (i.e. the epochs number without amelioration, after which training will be early stopped.), which equals 10. We may expect an improvement in the performance of the networks if we take a smaller patience value. We will explore this issue in the next section by performing a grid search on a search space including different patience values.

\begin{figure}[!htbp]
    \centering
    \includegraphics[width=17cm, height=10cm]{./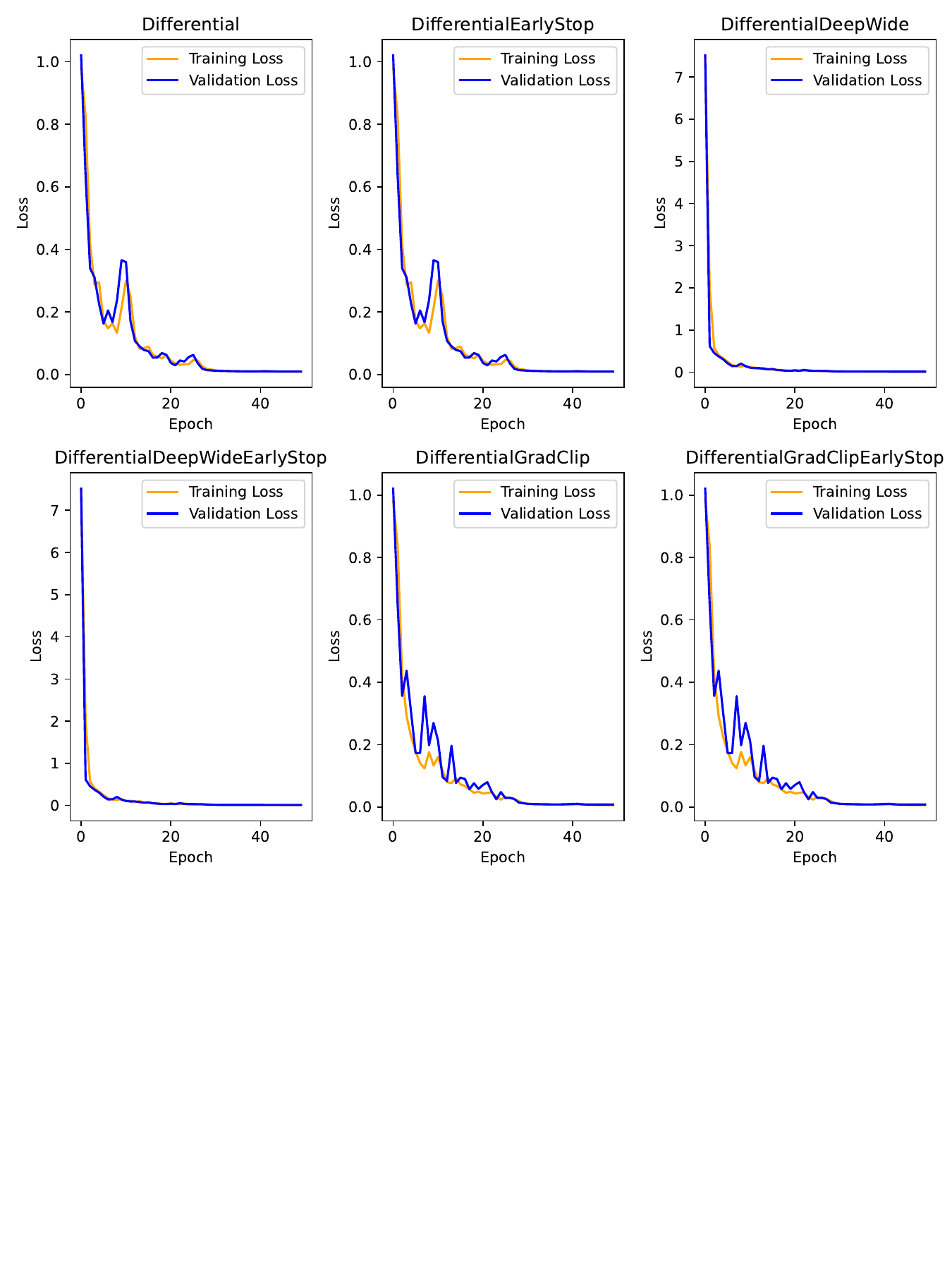}
    \caption{DNN training and validation losses according to the epoch. Top left: DNN, top middle: DNN after applying early stopping, top right: Differential wide \& deep network, bottom left: Differential wide \& deep network after applying early stopping, bottom middle: DNN after applying gradient clipping, bottom right: DNN after applying gradient clipping and early stopping.}
    \label{fig:loss_earlystop}
\end{figure}

\begin{table}[!htbp]
    \centering
    \begin{adjustbox}{width=0.5\textwidth}  
    \begin{tabular}{|l|l|l|l|l|l|}
    \hline
    Neural Network  &  MSE (bp)\\
    \hline       
     DNN & 0.44 \\
     DNN after applying early stopping & 0.45 \\
     Differential wide \& deep network & 0.38 \\
     Differential wide \& deep network after applying early stopping & 0.38 \\
     DNN after applying gradient clipping & 0.20 \\
     DNN after applying gradient clipping and early stopping & 0.20 \\   
        \hline
    \end{tabular}
    \end{adjustbox}
    \caption{DNN performance on the test set}
    \label{tab:mse_earlystop}
\end{table} 

% \begin{figure}[!htbp]
%     \centering
%     \includegraphics[width=17cm, height=8cm]{./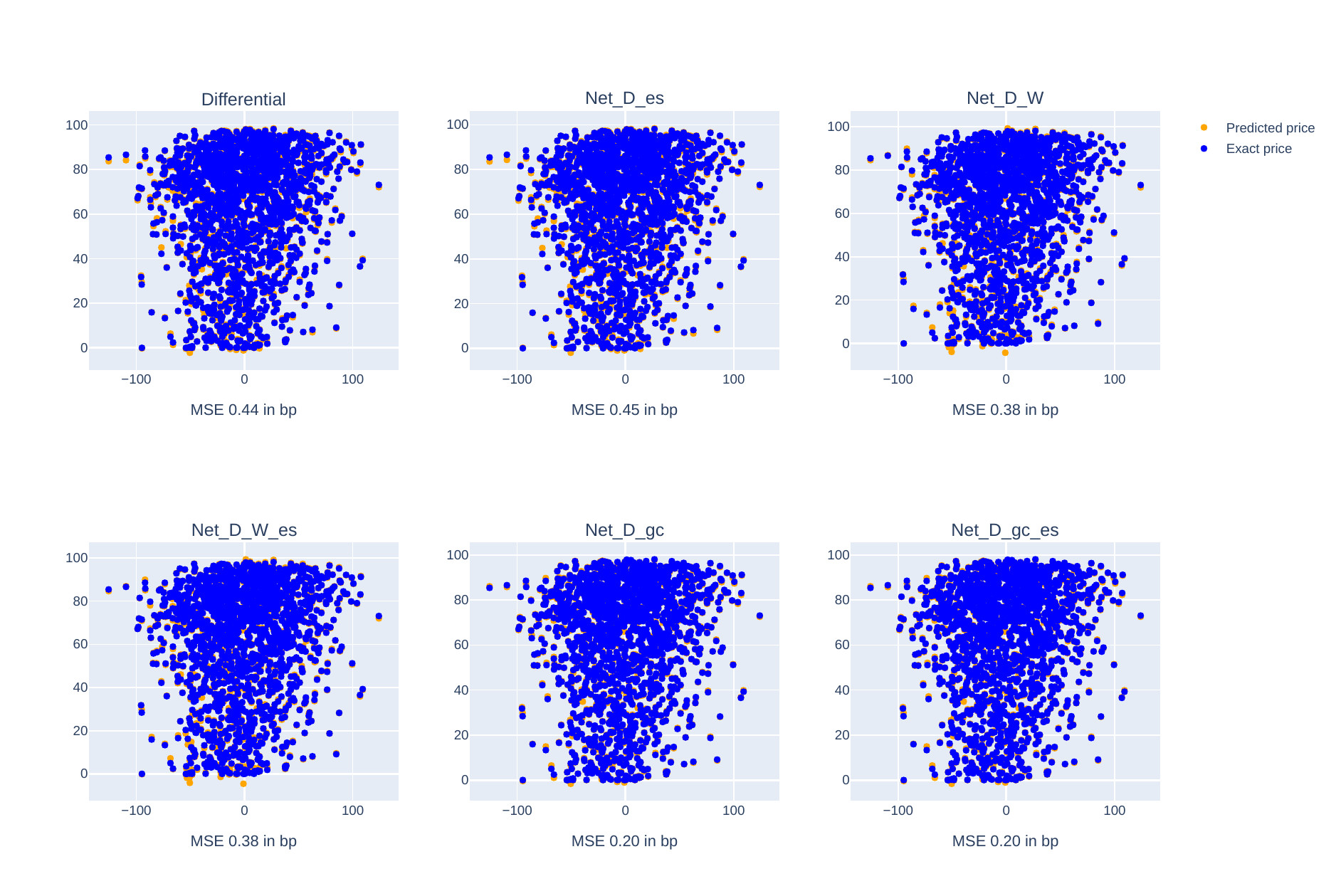}
%     \caption{DNN performance on the test set. Top left: DNN, top middle: DNN after applying early stopping, top right: Differential wide \& deep network, bottom left: Differential wide \& deep network after applying early stopping, bottom middle: DNN after applying gradient clipping, bottom right: DNN after applying gradient clipping and early stopping.}
%     \label{fig:mse_earlystop}
% \end{figure}

\subsection{Hyperparameters Tuning}
\label{section:hype_tun}

Hyperparameters are tunable parameters that drive the learning process. Model performance is highly dependent on them. Hyperparameters adjustment (tuning) is a crucial step in machine learning and requires statistical work to determine the hyperparameters configuration that produces the best performance. The process can be treated as a search problem. Grid Search is one of the standard approaches we use in this work. The principle consists of a list of values established for each hyperparameter. The model is trained for each combination of hyperparameters, and the cost function on the validation dataset is calculated each time. The combination that minimises this function is chosen. 

\subsubsection{Search Space}

DNN has a large number of hyperparameters, and executing a grid search on all of them is very computationally expensive. We then choose to perform a grid search on two search spaces. The first one is displayed in Table~\ref{tab:hyper_space}.\\

\begin{table}[!htbp]
    \centering
    \begin{adjustbox}{width=0.8\textwidth}  
    \begin{tabular}{|l|l|}
    \hline
Hyperparameter & Value\\
\hline 
\hline
 \multirow{4}*{Regularisation procedure} & wide \& deep \\ 
       & wide \& deep with early stopping  \\
      & gradient clipping \\
      & gradient clipping and early stopping \\
\hline
Number of hidden layers & $\in \{4, 6, 8\}$ \\
\hline
Number of Neurons per layer & $\in \{30, 50, 100\}$\\
\hline
Number of epochs & $\in \{30, 50, 100\}$\\
\hline
Early stopping patience & $\in \{5, 10, 15, 20\}$\\
\hline
Gradient clipping threshold & $\in \{0.5, 1, 2, 4\}$\\
\hline
Learning rate schedule & $\in \{0.1, 0.01, 10^{-3}, 10^{-4}, 2 \times 10^{-5}, 5 \times 10^{-5} , 10^{-6}, 10^{-8}\}$\\\hline
    \end{tabular}
    \end{adjustbox}
    \caption{Hyperparameters search space}
    \label{tab:hyper_space}
\end{table}

The second search space is the same as the first one. The only difference is that the learning rate schedule is no longer constant but is defined as in section~\ref{section:NN_architecture}.\\

Note that a model checkpoint callback has been introduced in the early stopping strategy, where the best weights and biases with respect to the validation loss are saved and re-used. 

\subsubsection{Best Model}

After almost four days on a GPU server, we record that the early stopping regularisation technique does not improve the network performance, and we obtain the following results: the best MSE on the validation set after performing the first hyperparameter tuning equals 0.0058, whereas the best one obtained after the second hyperparameter tuning equals 0.0076.\\

The hyperparameters configuration that reaches the best score is shown in Table~\ref{tab:best_model}.

\begin{table}[!htbp]
    \centering
    \begin{adjustbox}{width=0.6\textwidth}  
    \begin{tabular}{|l|l|}
    \hline
Hyperparameter & Value\\
\hline 
\hline
 Regularisation procedure & gradient clipping \\
\hline
Number of hidden layers & 6 \\
\hline
Number of Neurons per layer & 50\\
\hline
Number of epochs & 97\\
\hline
Early stopping patience & None\\
\hline
Gradient clipping threshold & 4\\
\hline
Learning rate schedule & 0.01 \\\hline
    \end{tabular}
    \end{adjustbox}
    \caption{The configuration of the best model}
    \label{tab:best_model}
\end{table}

As illustrated in Figure~\ref{fig:loss_bestmodel}, after 97 epochs, both training and validation losses converge without overfitting. The results and performance of the best-trained DNN are summarised in Table~\ref{tab:DML_performance}.

\begin{figure}[!htbp]
    \centering
    \includegraphics[width=14cm, height=8cm]{./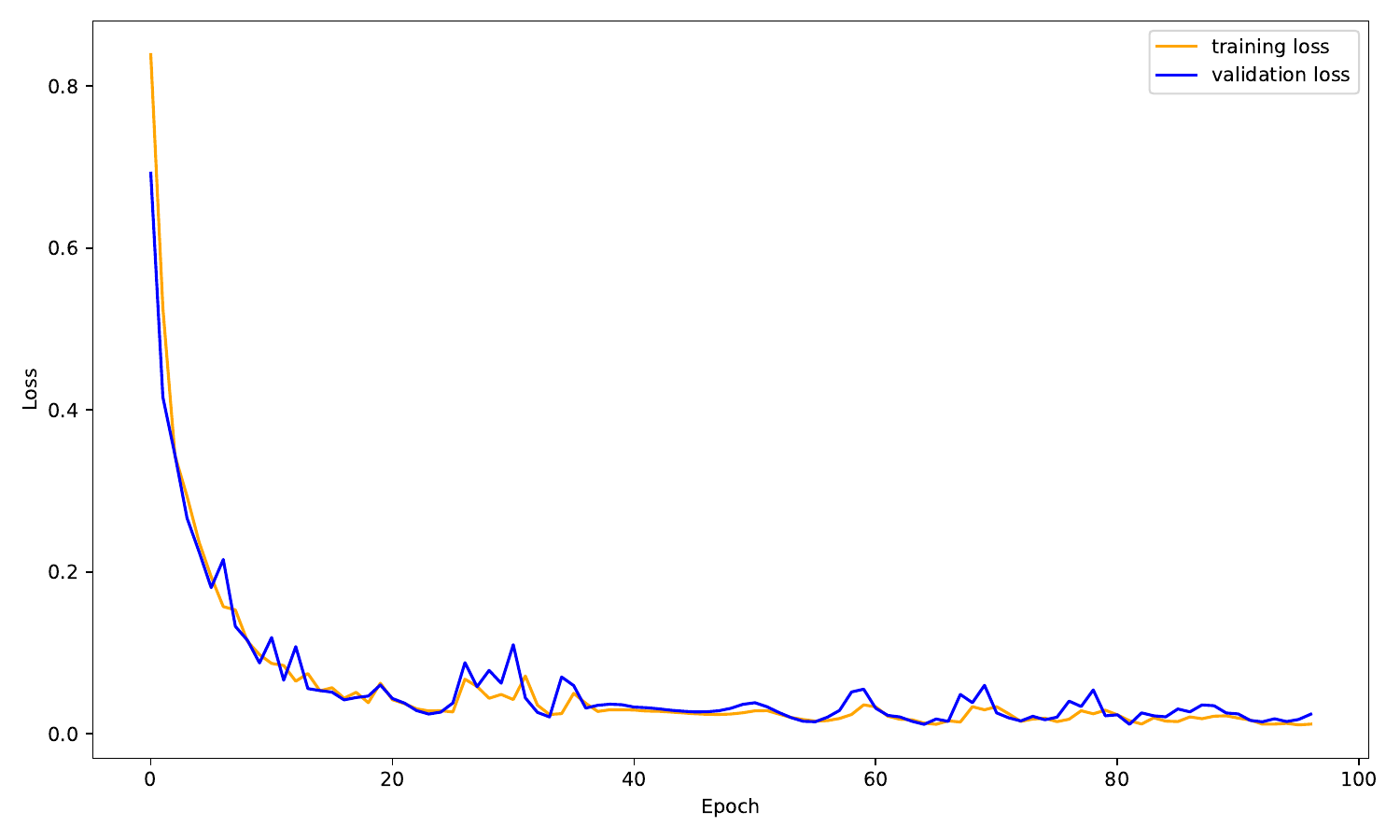}
    \caption{DNN training and validation losses according to the epoch of the best model.}
    \label{fig:loss_bestmodel}
\end{figure}

\begin{table}[!htbp]
    \centering
    \begin{adjustbox}{width=0.4 \textwidth}
    \begin{tabular}{|l|l|}
    \hline
DNN & MSE\\
\hline
 Training & 0.0061 \\
\hline
 Validation & 0.0058 \\
\hline
Testing & $3.72 \times 10^{-5}$ \\
\hline
    \end{tabular}
    \end{adjustbox}
    \caption{DNN performance of the best model}
    \label{tab:DML_performance}
\end{table}

It can be seen from Table~\ref{tab:DML_performance} that the performance of the best-trained DNN is robust and extremely high performing. Indeed, the performance on the training set is similar to the one on the validation set in terms of MSE. Furthermore, the MSE on the test set equals $3.72 \times 10^{-5}$, is less than the one on the validation set, equals 0.0058, showing an excellent generalisation ability of the model.

%%%%%%%%%%%%%%%%%%%%%%%%%%%%%%%%%%%%%%%%%%%%%%%%%%%%%%%%%%%%%%%%%%%%%%%%%%

\section{Calibration Results}
\label{chap:calibration_results}
The trained DNN is ready to be used to execute calibration on the market data. The present section details how to get real market option quotes as well as the interest rate curve. We detail two scenarios: calibrating (i) five or (ii) three parameters and keeping known $v_0$ and a fixed $\kappa$. This section also compares the traditional calibration to the DNN-based one and presents how the trained neural network dramatically reduces Heston calibration’s computation time. We provide a realistic and thorough evaluation of the obtained results in sections~\ref{section:DE},~\ref{section:calib_5param} and~\ref{section:calib_3param}. \\

In this section, Heston model calibration is executed by minimising the set of five parameters $\theta, \sigma, \rho, \kappa, v_0$. We then fix $\kappa$ and $v_0$ and optimise parameters $\theta$, $\sigma$ and $\rho$. By this, the optimisation reduces to three parameters. This approach accelerates the calibration process by reducing the set of parameters to be optimised instead of minimising all five parameters.   

\subsection{Market Dataset Presentation}

\subsubsection{Market Option Quotes}
In order to perform the calibration on the market data, the first step is to get real market option prices.\\

There are several stock market indices, and the \emph{S\&P 500 index} is a benchmark. It is the abbreviation of the Standard \& Poor's 500 index, an index of the largest (by value) American companies listed on the NYSE (New York Stock Exchange), the world's largest stock exchange and the NASDAQ (National Association of Securities Dealers Automated Quotation System) the second largest stock market in the United States. The index includes 500 leading companies and reflects approximately 80\% of the market capitalization coverage available when counting Apple, Microsoft and Exxon as the three most prominent companies. It is considered the best representation of the US stock market and a barometer of the US economy. The code for the index called \emph{ticker} is \emph{SPX}.\\

In this work, we use the \emph{S\&P 500 index} options that we collect from \emph{yahoo finance} \url{https://finance.yahoo.com/quote/%5ESPX/options?p=%5ESPX}.
The platform offers call and put quotes. In our case, we only need the put table.
When calibrating, we consider put options with different maturities and expiries. For different maturities, there are different quotes for different strikes. Therefore, the data needed to perform calibration consists of maturities, strikes and the put prices corresponding to each strike and each maturity.\\ 

Put table provides maturities in the form of day/month/year, but we need it to be in years. To make the conversion, We use \emph{numpy.busday\_count} to count the number of days between collecting data and the options' expiry date. Since options are only traded on open days, we also remove US holidays to obtain the maturities in days that we divide by the number of trading days (equal to 252 days). Because the DNN has been trained for maturities between 0.05 and 20 years; we decide only to keep market data of maturities bigger than 0.05 years.\\

Furthermore, we convert strikes to integers.\\

There are bid and ask prices for put options: the bid price, the offer price, corresponds to the maximum price that a buyer is ready to pay immediately for financial security or an asset. The ask price, the asking price, is the minimum price a seller is willing to offer for the immediate sale of the same asset. A transaction occurs when the buyer and seller agree on the price of the asset in question, which must be lower than the bid price and higher than the ask price. Option prices that we consider for calibration are the mean between the bid and ask prices that are different from 0.\\

The market dataset we use dates from 10/08/2022 for a spot equal to $S_0 = 4210.24$. The file \emph{complete\_calib\_data.csv} contains the complete raw call and put data, including columns for prices as described above and maturities in years. The put data that we use in the calibration process is in the file \emph{puts\_data.csv}; it consists of $32$ different maturities from 0.0555 years to 5.3055 years, 489 different strikes from 200 to 9200 and 4625 market prices.   

\subsubsection{Interest Rate Curve}

Recall that calibration is the process of adjusting the parameters of a model such that model prices are compatible with market prices. We need an interest rate for each maturity to compute model prices. \emph{US Daily Treasury Par Yield Curve Rates}
\url{https://home.treasury.gov/policy-issues/financing-the-government/interest-rate-statistics?data=yield%27} provides interest rates for 12 maturities. On 10/08/2022, These rates are as follows:

\begin{table}[!htbp]
    \centering
    \resizebox{\textwidth}{!}
    {\begin{tabular}{|c|c|c|c|c|c|c|c|c|c|c|c|c|}
    \hline
Maturity & 1 Mo	& 2 Mo	& 3 Mo & 6 Mo & 1 Yr & 2 Yr	& 3 Yr	& 5 Yr & 7 Yr & 10 Yr & 20 Yr & 30 Yr\\
\hline
 Rate (\%) & 2.24 & 2.43 & 2.65 & 3.13 & 3.26 & 3.23 & 3.13 & 2.93 & 2.86 & 2.78 & 3.27 & 3.04 \\
\hline
    \end{tabular}}
    \caption{Daily Treasury Par Yield Curve Rates as on 10/08/2022}
    \label{tab:free_interest_rate}
\end{table}

To find rates corresponding to maturities that are not provided by \emph{US Daily Treasury Par Yield Curve Rates}, we perform interpolation in order to recover them. The Nelson Siegel Svensson method~\cite{nelson_siegel, svensson} is one of the most common methods of interpolating a curve. To implement it, we use the function \emph{calibrate\_ns\_ols} from \emph{nelson\-siegel\-svensson} package. Figure~\ref{fig:yield_curve} illustrates the extrapolated yield curve.

\begin{figure}[!htbp]
    \centering
    \includegraphics[width=12cm, height=6cm]{./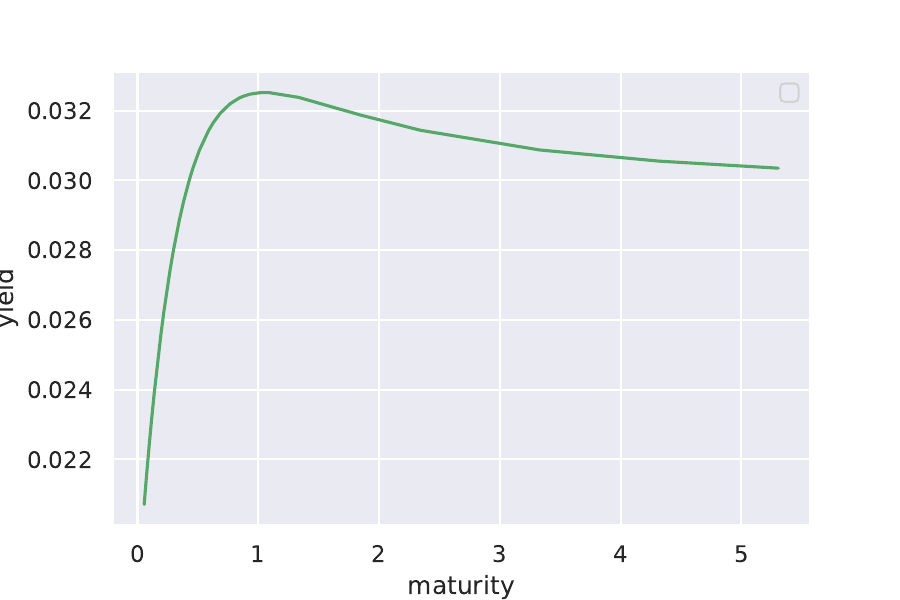}
    \caption{Yield curve}
    \label{fig:yield_curve}
\end{figure}

Note that for our calibration we use $32$ rates provided in the file \emph{rates.csv}.

\subsection{Optimising the Objective Function}

Remember, from section~\ref{section:model_calibration}, that model calibration consists in minimising the objective function that we choose to be the Weighted RMSE between the model prices $\hat{V}$ and the market ones $V_{Mkt}$ with respect to the set of parameters $\theta \in \Theta$ , given by:

\begin{equation*}
    J(\theta) = \sqrt{\sum_i \sum_j w_{i,j} \left(\hat{V}(\theta, T_i, m_j) - V_{Mkt}(T_i, m_j)\right)^2}
\end{equation*}

We opt for equal weights $w_{i,j}$ such that for all $i,j, w_{i,j} = 1/n$, where $n$ denotes the total available number of quotes. The put model prices used in the objective function are given by: 

\begin{equation*}
\label{equation:nfpp_def_1}
    P_t = K e^{-r(T-t)} \hat{P}_t
\end{equation*}

where $\hat{P}$ is the normalised forward put price.
\subsubsection{Differential Evolution Optimiser}
\label{section:DE}

As mentioned in section~\ref{section:traditional_methods}, the DE method's advantages include the algorithm's ability to find a global minimiser without any initial parameter set and without the need to compute gradients. However, the main disadvantage of this approach is that founding a global minimum is quite time-consuming. Firstly, we planned to apply the DE algorithm to calibrate the whole surface. When the model prices are calculated via semi-closed form formulas, the algorithm ran for more than 24 hours without returning a result. So we decided only to use this approach with the setting given in Table~\ref{tab:DE} as an initial experiment by examining the calibration on five parameters on the maturity having the least number of strikes, which is the last maturity (5.3055 years), for which 19 strikes, and then 19 market prices are quoted. Note that the search space of the DE algorithm is limited to the boundary chosen for each Heston model parameter given in Table~\ref{tab:sampling_range}. In order to implement the DE algorithm, we use the function \emph{differential\_evolution} from \emph{scipy.optimize} package.\\

\begin{table}[!htbp]
    \centering
    \begin{adjustbox}{width=0.5
    \textwidth}  
    \begin{tabular}{|c|c|}
    \hline
Parameter & setting\\
\hline
Strategy & best1bin\\
Number of generations & 90\\
Population size & 50\\
Tolerance for convergence & $10^{-6}$\\
Differential weight & 0.5\\
Crossover recombination & 0.7\\
\hline
    \end{tabular}
    \end{adjustbox}
    \caption{The setting of the Differential Evolution}
    \label{tab:DE}
\end{table}

For this specific case, the calibration process takes\footnote{Note that the CPU time may vary a little depending on the machine's current status. However, the ratio between the time of the traditional calibration and the DNN-based one remains pretty much the same.}5 minutes and 28 seconds when the model prices are calculated via DML, which is 16.5 times faster than the traditional method, which takes 1 hour, 30 minutes and 42 seconds, illustrating the benefit of the DML approach in reducing markedly the computation time. The achieved results are shown in Table~\ref{tab:diffevol_one_mat} where we display the calibrated parameters resulting from both the traditional and the DNN-based calibration as well as the absolute difference between them.
      
\begin{table}[!htbp]
    \centering
    \begin{adjustbox}{width=0.9\textwidth}  
    \begin{tabular}{|c|c|c|c|}
    \hline
Parameter & Traditional Calibration & DNN-based Calibration & Absolute Difference\\
\hline
$\theta$ & 0.0433 & 0.1353 &  0.0920\\
$\sigma$ & 0.3027 & 1.9980 &  1.6953\\
$\rho$ & -0.4290  & $-8.34 \times 10^{-7}$ & 0.4290 \\
$\kappa$ & 0.1580 & 1.3936 & 1.2356 \\
$v_0$ & 0.1017 & $1.02 \times 10^{-4}$ & 0.1016 \\
\hline
    \end{tabular}
    \end{adjustbox}
    \caption{Model calibration results using the differential evolution algorithm.}
    \label{tab:diffevol_one_mat}
\end{table}

Looking at the absolute difference in Table~\ref{tab:diffevol_one_mat}, it is clear that the calibrated parameters resulting from the DNN-based approach and those from the traditional method are not similar. This situation can be explained by the fact that the objective function to minimise does not reach the correct minima. Indeed, the DE algorithm needs a large number of generations, e.g. 1000, to give an accurate result, while we only use 90 generations. Increasing the number of generations will hugely increase the computational time when performing the traditional calibration. \\

As aforementioned, we do not optimise the objective function using the differential evolution optimiser to calibrate the Heston model parameters on the whole market dataset. We use the Nelder-Mead optimiser instead.

\subsubsection{Nelder-Mead Optimiser}
The Nelder-Mead algorithm is chosen to optimise the objective function. The algorithm's parameters are given in Table~\ref{tab:nelder_mead}:
                      
\begin{table}[!htbp]
    \centering
    \begin{adjustbox}{width=0.5\textwidth}  
    \begin{tabular}{|c|c|}
    \hline
Parameter & Value\\
\hline
Number of generations & 1000\\
Tolerance for convergence & $10^{-6}$\\
\hline
    \end{tabular}
    \end{adjustbox}
    \caption{Nelder-Mead parameters}
    \label{tab:nelder_mead}
\end{table}
The choice of the initial parameter set for the algorithm is crucial to find the correct optimal parameter set. We tried different initial parameter sets; the best initial guess we have fund when the calibration is performed on five parameters (respectively three parameters) is given in Table~\ref{tab:initial_guess_5params} (respectively ~\ref{tab:initial_guess_3params}).

\begin{table}[!htbp]
    \centering
     \begin{adjustbox}{width=0.9\textwidth}  
    \begin{tabular}{|c|c c c c c|}
    \hline
Parameter & $\theta$ & $\sigma$ & $\rho$ & $\kappa$ & $v_0$\\
\hline
Value & 0.1021 & 1.5986 & -0.3899 & 1.4719 &  $1.12 \times 10^{-5}$ \\
\hline
    \end{tabular}
    \end{adjustbox}
    \caption{Initial parameter set - calibration on five parameters.}
    \label{tab:initial_guess_5params}
\end{table}

\begin{table}[!htbp]
    \centering
    \begin{adjustbox}{width=0.6\textwidth}  
    \begin{tabular}{|c|c c c|}
    \hline
Parameter & $\theta$ & $\sigma$ & $\rho$ \\
\hline
Value & 0.2752 & 0.4571 & -0.4477 \\
\hline
    \end{tabular}
    \end{adjustbox}
    \caption{Initial parameter set - calibration on three parameters}
    \label{tab:initial_guess_3params}
\end{table}

Note that to implement the Nelder-Mead algorithm, we use the function \emph{fmin} from \emph{scipy.optimize} package.

\subsection{Five Parameters Calibration}
\label{section:calib_5param}

Now, we perform a calibration on the whole surface on the Heston pricer and its DML equivalent. The optimal parameters that we obtain and the computation times of both approaches are presented in Table~\ref{tab:calib_5params}.

\begin{table}[!htbp]
    \centering
    \resizebox{\textwidth}{!}
    {\begin{tabular}{|c|c|c|c|c|c|c|}
    \hline
Parameter & $\theta$ & $\sigma$ & $\rho$ & $\kappa$ & $v_0$ & CPU Time\\
\hline
Traditional Calibration & 0.1072 & 1.5986 & -0.3899 & 1.4719 & $1.12 \times 10^{-5}$ &  00:20:23 \\
DNN-based Calibration & 0.1072 & 1.5986 & -0.3899 & 1.4719 & $1.12 \times 10^{-5}$ & 00:00:20\\
\hline
    \end{tabular}}
    \caption{Model calibration on five parameters results.}
    \label{tab:calib_5params}
\end{table}

It can be seen from Table~\ref{tab:calib_5params} that the calibrated parameters resulting from DML are perfect fit for those resulting from the Heston pricer. The DNN-based calibration takes only 20 seconds, which is 61 times faster than the traditional method. This considerably time saving is explained by the fact that the DNN price is considered "deterministic", as stated in section~\ref{section: ANN_approach}, whilst the Heston price involves an integral approximation that takes longer to compute.\\

Let us now look at the performance of the two sets of calibrated parameters in computing the option prices. To this end, both sets of calibrated parameters are inserted into the traditional Heston pricer to compute both Heston prices. To be more specific:

\begin{enumerate}
 \item  We insert the optimal parameters resulting from the traditional calibration into the traditional Heston pricer and compute Heston model prices. We refer to this price as \emph{Heston price}.
 
 \item  We insert the optimal parameters resulting from the DNN-based calibration into the traditional Heston pricer and compute Heston model prices. We refer to this price as \emph{DNN price}.
\end{enumerate}
 
We then compare Heston and DNN prices with the market prices for a maturity equal to 5.3055 years. Figure~\ref{fig:prices_5param} illustrates both model prices as a function of the market price. Figure~\ref{fig:diff_prices_5param} displays Heston, DNN and market prices as a function of the strike.

\begin{figure}[!htbp]
    \centering
    \includegraphics[width=14cm, height=8cm]{./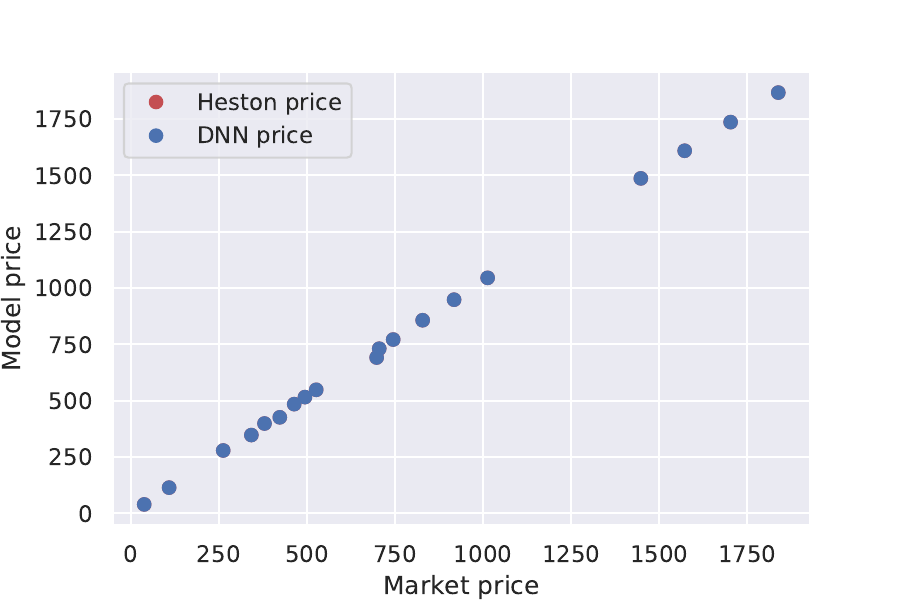}
    \caption{Option price comparison for traditional and DNN-based calibration parameters and market prices. Here the calibration is based on five parameters. We do not see Heston prices because Heston and DNN prices overlap since they are equal.}
    \label{fig:prices_5param}
\end{figure}

\begin{figure}[!htbp]
    \centering
    \includegraphics[width=14cm, height=8cm]{./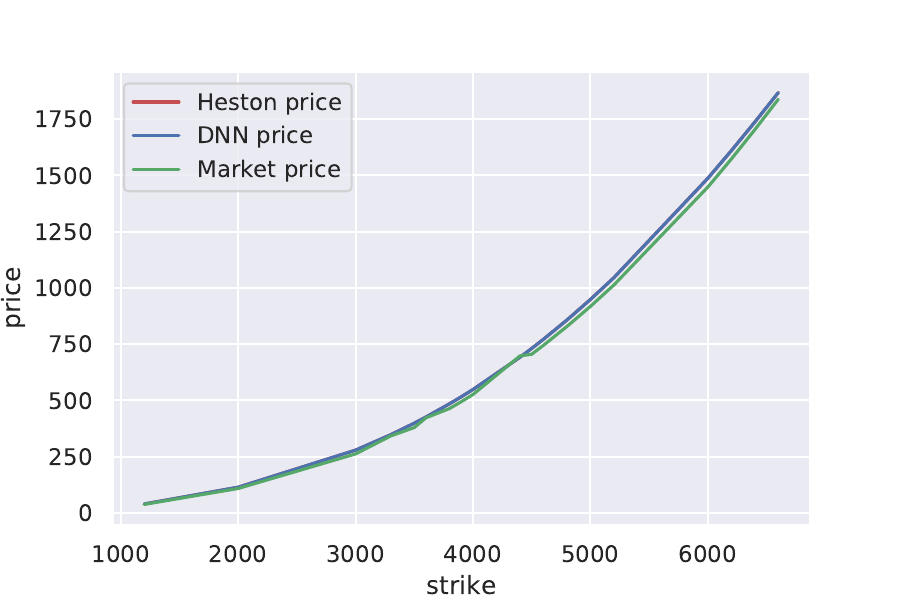}
    \caption{Market, Heston and DNN prices. Here the calibration is based on five parameters. We do not see Heston prices because Heston and DNN prices overlap since they are equal.}
    \label{fig:diff_prices_5param}
\end{figure}

Looking at the graphs, we notice that the prices computed via the DNN and the Heston pricer are equal, which is expected since their calibrated parameters are the same. This confirms the capacity of fitting of the differential network with respect to the Heston pricer. Moreover, both model prices are a good fit for market prices since they are very close to market prices, and the average absolute error between the model and market prices is relatively small; it equals 21.52 for market prices ranging from 38.10 to 1837.85.\\

According to these results, the calibration accuracy of the DNN and the Heston pricer is the same. Nevertheless, there is a vast difference between their computational times, which substantially highlights the advantage of using a trained DNN instead of traditional Heston calibration methods.

\subsection{Three Parameters Calibration}
\label{section:calib_3param}

In this section, $\kappa$ and $v_0$ are fixed. This reduces the optimisation to three parameters, namely $\theta$, $\sigma$ and $\rho$. \\

We choose $\kappa = 0.15$. The ATM implied volatility (IV) of an option with the shortest maturity can be used to approximate $v_0: v_0 \approx \sigma^{ATM^2}_{imp}$. Let us then find the ATM IV of the shortest maturity:  We know that the ATM IV is the IV corresponding to moneyness 0. The S\&P 500 index data of the 10 August 2022 \url{https://www.barchart.com/stocks/quotes/$SPX/options?expiration=2022-08-11-w&moneyness=allRows} does not contain a moneyness 0, we then make an interpolation between the two IV values corresponding to the two closest moneyness to 0. Thus the ATM IV is the mean between 1.5\% and 2.04\% : 

\begin{equation*}
  \sigma_{imp}^{ATM} =   \frac{1.5 \% + 2.04 \%}{2} = 0.0177
\end{equation*}

Thus 

\begin{equation*}
    v_0 = \sigma^{ATM^2}_{imp} = 3.13 \times 10^{-4}
\end{equation*}

As with five parameters calibration, we now  perform a calibration on the whole surface on the Heston pricer and its DML equivalent. Table~\ref{tab:calib_3params} describes the optimal parameters that we obtain and the computation time.

\begin{table}[!htbp]
    \centering
    \begin{adjustbox}{width=0.9\textwidth}  
    \begin{tabular}{|c|c|c|c|c|}
    \hline
Parameter & $\theta$ & $\sigma$ & $\rho$ & CPU Time\\
\hline
Traditional calibration & 0.2890 & 0.4571 & -0.4477 & 00:08:14 \\
DNN-based calibration & 0.2890 & 0.4571 & -0.4477 & 00:00:14\\
\hline
    \end{tabular}
    \end{adjustbox}
    \caption{Model calibration on three parameters results.}
    \label{tab:calib_3params}
\end{table}

It can be seen from Table~\ref{tab:calib_3params} that the calibrated parameters resulting from DML are perfect fit for those resulting from the Heston pricer. The DNN-based calibration takes only 14 seconds, which is 35 times faster than the traditional method. As aforementioned, this considerably time-saving is explained by the fact that the DNN price is considered "deterministic", whilst the Heston price involves an integral approximation that takes longer to compute. In addition, we notice that the traditional calibration on three parameters is almost 2.5 times faster than that on five. Also, the DNN-based calibration on three parameters takes almost 1.5 times faster than that on five and 87 times faster than the traditional calibration on five parameters. This is expected as the number of optimal parameters is reduced. \\

Like in section~\ref{section:calib_5param} we look at the performance of the two sets of calibrated parameters in computing the option prices. To this end, both sets of calibrated parameters are inserted into the traditional Heston pricer to compute both Heston prices. We then compare both model Heston prices with the market prices for a maturity equal to 5.3055 years. Figure~\ref{fig:prices_3param} illustrates both model prices as a function of the market price. Figure~\ref{fig:diff_prices_3param} displays Heston, DNN and market prices as a function of the strike.\\

\begin{figure}[!htbp]
    \centering
    \includegraphics[width=14cm, height=8cm]{./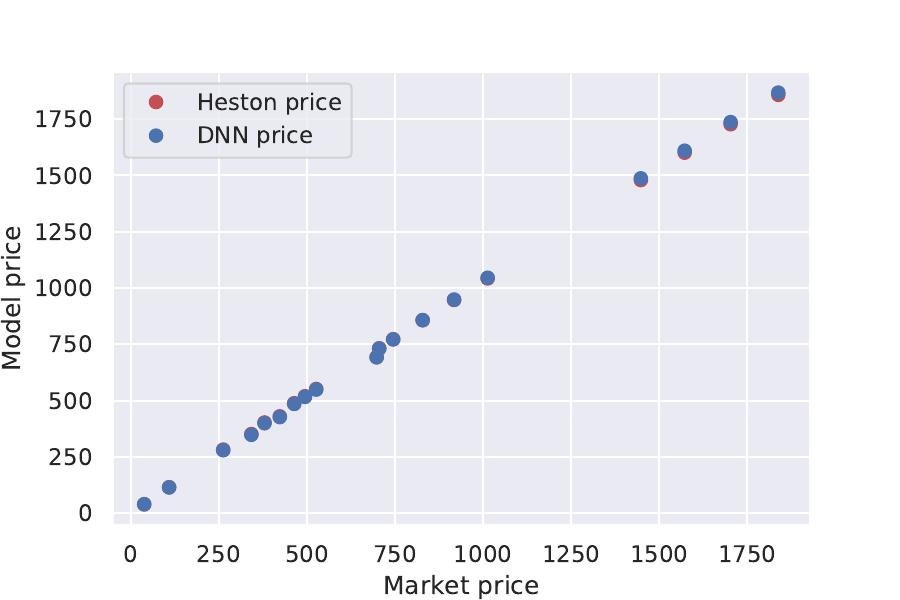}
    \caption{Option price comparison for traditional and DNN-based calibration parameters and market prices. Here the calibration is based on three parameters. We do not see Heston prices because Heston and DNN prices overlap since they are equal.}
    \label{fig:prices_3param}
\end{figure}

\begin{figure}[!htbp]
    \centering
    \includegraphics[width=14cm, height=8cm]{./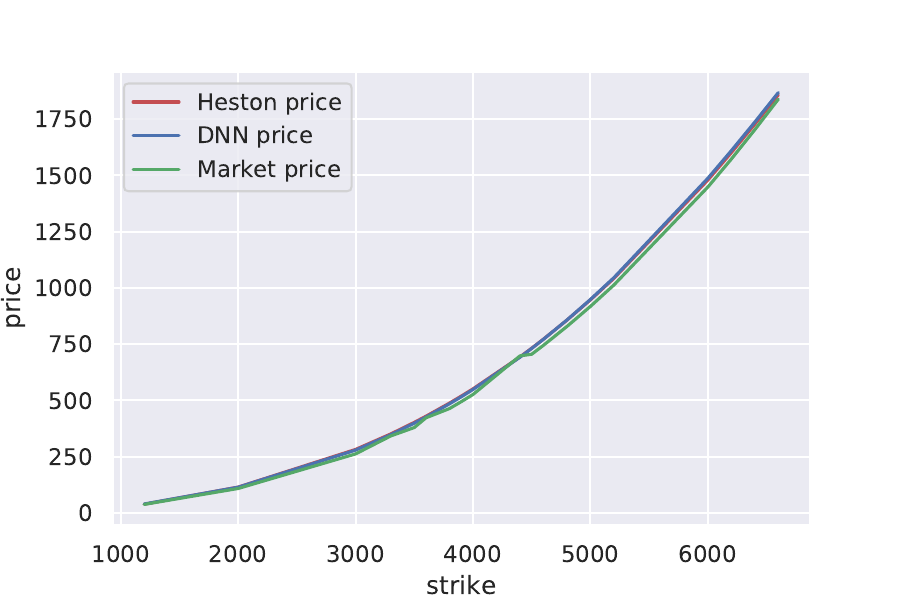}
    \caption{Market, Heston and DNN prices. Here the calibration is based on three parameters. We do not see Heston prices because Heston and DNN prices overlap since they are equal.}
    \label{fig:diff_prices_3param}
\end{figure}

Looking at the graphs, we notice that the prices resulting from DNN and Heston pricer are equal, which is expected since their calibrated parameters are the same. This confirms the capacity of fitting of the differential network with respect to the Heston pricer. Moreover, both model prices are a good fit for market prices since they are very close to market prices, and the average absolute error between the model and market prices is relatively small; it equals 20.46 for market prices ranging from 38.10 to 1837.85 and is close to the one calculated on five parameters (equals 21.52). These results imply that the prices resulting from both the DNN-based and the traditional calibration on three parameters are very similar to those on five.\\

According to these results, the DNN and the Heston pricer's calibration accuracy is the same and similar to the one on five parameters. However, the same results can be achieved much faster when the calibration is performed on three parameters with the DNN-based approach, which implies that using a trained DNN on three parameters instead of traditional Heston calibration methods on five parameters is more advantageous.

\section{Conclusion and Further Research}
\label{chap:conclusion}

This project's overarching contribution consists of applying the DML technique to price vanilla European options (i.e. the calibration instruments) when the underlying asset follows a Heston model and then calibrating the model using the trained network.

\subsection{Findings and Contributions}

In this work, we introduced different regularisation techniques applied to the DML. We compared their performance in reducing overfitting and improving the generalisation error. Furthermore, the DML performance was compared to the classical DL (without differentiation) one. As a result, we showed that the DML vastly outperforms the DL. We also demonstrated that DML could learn effectively from small datasets compared to DL. Thus training is fast (fewer labels are needed), and the pricing is accurate.\\

In this work's last part, we calibrated the market data in two scenarios, calibrating either five or three parameters. We compared the traditional calibration to the DNN-based one for both options. We found that the calibration accuracy with three parameters is similar to that with five. The calibration of three parameters is faster regardless of the price computation type (real pricer or DNN-based). However,  the results can be achieved much faster when the calibration is performed with the DNN-based approach. As a result, using the DNN-based price and calibrating three parameters dramatically reduces Heston calibration’s computation time since it predicts prices considerably faster.

%%%%%%%%%%%%%

\subsection{Further Research}
This project focuses on the vanilla European option pricing because our goal was to calibrate the Heston model on the market data. Nevertheless, the pricing of other
type of options, for example the Asian, Barrier and some other European exotic options, using the DML technique can be performed in the future. The challenge in this case is to compute real option prices. One can use finite difference or MC techniques to produce the labels. Another possibility is to consider directly the payoff as a label as in ~\cite{differential_ML}. \\

Considering the payoff as a label leads to a question potentially being studied in further works. Is the DNN's accuracy higher when the label is the price (compared to considering the payoffs) as we may intuit? When using the payoff as a label in the approach presented in this work, it can be used for other dynamics (e.g. local volatility moels, stochastic local volatility models, etc.).\\

%%%%%%%%%%%%%%%%%%%%%%%%%%%%%%%%%%%%%%%%%%%%%%%%%%%%%%%%%%%%%%%%%%%%%%%%%%%
%%%%%%%%%%%%%%%%%%%%%%%%%%%%%%%%%%%%%%%%%%%%%%%%%%%%%%%%%

\newpage

\appendix

\section{Proof of Proposition \ref{proposition:bs_particular_case}}
\label{section:dem_bs_particular_case}

Itô's lemma applied to the log-moneyness forward $\hat{F}$ gives:
\begin{equation*}
    d\hat{F}_{t,T} = (r-\frac{1}{2} v_t)dt + \sqrt{v_t}dW^1_t
\end{equation*}

When $\sigma << \epsilon$, Heston variance (\ref{equation:heston_sde}) satisfies:
\begin{equation*}
    dv_t = \kappa (\theta-v_t)dt
\end{equation*}
\begin{equation*}
  \Rightarrow  e^{\kappa t}\left(dv_t + \kappa v_t dt \right) = e^{\kappa t}\kappa \theta dt
\end{equation*}

\begin{equation*}
  \Rightarrow  e^{\kappa T} v_T = e^{\kappa t} v_t + {\displaystyle \int^{T}_t} e^{\kappa s} \kappa \theta ds
  \end{equation*}
 
  \begin{equation*}
  \begin{array}{cl}
    \Rightarrow v_T  & = e^{-\kappa (T-t)} v_t  + {\displaystyle \int^{T}_t} e^{-\kappa (T-s)} \kappa \theta ds\\
      & = e^{-\kappa (T-t)} v_t  + \kappa \theta e^{-\kappa T}\left[\frac{1}{\kappa} e^{ks}\right]^T_t
   \end{array}
   \end{equation*}
   which gives:
   \begin{equation*}
       v_T = \theta + e^{-\kappa (T-t)} (v_t - \theta)
   \end{equation*}
or:
\begin{equation}
       \label{equation:nu}
       v_t = \theta + e^{-\kappa t} (v_0 - \theta)
   \end{equation}
The implied volatility is defined as:
\begin{equation}
\label{equation:sigma_implicite}
    \sigma^2_{imp} T = {\displaystyle \int^{T}_0} v_s ds
\end{equation}
By substituting (\ref{equation:nu}) into (\ref{equation:sigma_implicite}) we obtain:

\begin{eqnarray*}
\begin{array}{cl}
\sigma^2_{imp} T & = {\displaystyle \int^{T}_0} \theta + e^{-\kappa s}(v_0-\theta) ds \\
 & = \theta T + (v_0-\theta) \left[-\frac{1}{\kappa} e^{-\kappa s} \right]^T_0 \\
  & = \theta T + \frac{1-e^{-\kappa T}}{\kappa} (v_0-\theta)
\end{array}
\end{eqnarray*}
 which gives the result.
%%%%%%%%%%%%%%%%%%%%%%%%%%%%%%%%%%%%%%%%%%%%%%%%%%%%%%%
\section{Proof of the Different Equations in Section~\ref{section:diff_p_prime_inputs}}
\label{appendix:appendix_b}
\subsection{Proof of Equation (\ref{equation:diff_wrt_theta})}
\label{section:demo_diff_wrt_theta}

Calculating the partial derivative of the normalised forward put price $\hat{P}$ with respect to $\theta$, we get:
\begin{equation}
    \frac{\partial \hat{P}}{\partial \theta} =  - \frac{1}{\pi} {\displaystyle \int^{+ \infty}_0 Re \left[e^{ (i u + \frac{1}{2})\hat{F}_{t,T}} \  \frac{\partial \phi_{\tau}\left(u-\frac{i}{2}\right)}{\partial \theta}\right]\frac{\,du}{u^2+\frac{1}{4}}}
\end{equation}

We then differentiate the characteristic function (\ref{equation:phi}) with respect to $\theta$:
\begin{equation}
\label{equation:diff_phi_theta}
    \frac{\partial \phi_{\tau}(u)}{\partial \theta} = \phi_{\tau}(u)\left[\frac{\partial A(u)}{\partial \theta} +v_0 \frac{\partial B(u)}{\partial \theta}\right]
\end{equation}
We have:
\begin{equation}
\label{equation:diff_A_theta}
\frac{\partial A(u)}{\partial \theta} = \frac{A(u)}{\theta} 
\end{equation}
 and:
 \begin{equation}
     \label{equation:diff_B_theta}
    \frac{\partial B(u)}{\partial \theta} = 0
 \end{equation}

By substituting (\ref{equation:diff_A_theta}) and (\ref{equation:diff_B_theta}) in (\ref{equation:diff_phi_theta}), we obtain:

\begin{equation*}
    \frac{\partial \phi_{\tau}(u)}{\partial \theta} = \frac{A(u)}{\theta} \ \phi_{\tau}(u)
\end{equation*}
which yields the result.
%%%%%%%%%%%%%%%%%%%%%%%%%%%%%%%%%%%%%%%%%%%%%%
\subsection{Proof of Equation (\ref{equation:diff_wrt_m})}
\label{section:demo_diff_wrt_m}

Calculating the partial derivative of the normalised forward put price $\hat{P}$ with respect to  $m$, we get:
\begin{equation}
\label{equation:diff_p_prime_m}
    \frac{\partial \hat{P}}{\partial m} =  - \frac{1}{\pi} {\displaystyle \int^{+ \infty}_0 Re \left[e^{ (i u + \frac{1}{2})r \tau}  \phi_{\tau}\left(u-\frac{i}{2}\right)   \frac{\partial e^{ (i u + \frac{1}{2}) m}}{\partial m}\right]\frac{\,du}{u^2+\frac{1}{4}}}
\end{equation}

We have:
\begin{equation}
\label{equation:diff_m}
\frac{\partial e^{ (i u + \frac{1}{2}) m}}{\partial m} =  (i u + \frac{1}{2})\  e^{ (i u + \frac{1}{2}) m}
\end{equation}

By substituting (\ref{equation:diff_m}) in (\ref{equation:diff_p_prime_m}), we obtain the result.

%%%%%%%%%%%%%%%%%%%%%%%%%%%%%%%%%%%%
\subsection{Proof of Equation (\ref{equation:diff_wrt_v0})}
\label{section:demo_diff_wrt_v0}

Calculating the partial derivative of the normalised forward put price $\hat{P}$ with respect to  $v_0$, we get:

\begin{equation}
\label{equation:diff_p_prime}
    \frac{\partial \hat{P}}{\partial v_0} =  - \frac{1}{\pi} {\displaystyle \int^{+ \infty}_0 Re \left[e^{ (i u + \frac{1}{2})\hat{F}_{t,T}} \ \frac{\partial \phi_{\tau}\left(u-\frac{i}{2}\right)}{\partial v_0}\right]\frac{\,du}{u^2+\frac{1}{4}}}
\end{equation}

Differentiating the characteristic function (\ref{equation:phi}) with respect to $v_0$ yields:

\begin{equation}
\label{equation:diff_phi_v0}
    \frac{\partial \phi_{\tau}(u)}{\partial v_0} = B(u)\phi_{\tau}(u)
\end{equation}

By substituting (\ref{equation:diff_phi_v0}) in (\ref{equation:diff_p_prime}), we get the result.
%%%%%%%%%%%%%%%%%%%%%%%%%%%%%%%%%%%%

\subsection{Proof of Equation (\ref{equation:diff_wrt_r})}
\label{section:demo_diff_wrt_r}

Calculating the partial derivative of the normalised forward put price $\hat{P}$ with respect to $r$, we get:
\begin{equation}
\label{equation:diff_p_prime_r}
    \frac{\partial \hat{P}}{\partial r} =  - \frac{1}{\pi} {\displaystyle \int^{+ \infty}_0 Re \left[e^{ (i u + \frac{1}{2})m} \ \phi_{\tau}(u-\frac{i}{2}) \  \frac{\partial e^{ (i u + \frac{1}{2}) r \tau}}{\partial r}\right]\frac{\,du}{u^2+\frac{1}{4}}}
\end{equation}

We have:
\begin{equation}
\label{equation:diff_r}
\frac{\partial e^{ (i u + \frac{1}{2}) r \tau}}{\partial r} =  \tau \ (i u + \frac{1}{2}) \ e^{ (i u + \frac{1}{2}) r \tau}
\end{equation}

A substitution of (\ref{equation:diff_r}) in (\ref{equation:diff_p_prime_r}) and the expression of the partial derivative of the normalised forward put price $\hat{P}$ with respect to  the log moneyness $m$ (\ref{equation:diff_wrt_m}) give the result.

%%%%%%%%%%%%%%%%%%%%%%%%%%%%%%%%%%%%%%%%%%%%%%%%%%%%%

\subsection{Proof of Equation (\ref{equation:diff_wrt_tau})}
\label{section:demo_diff_wrt_tau}

The partial derivative of the normalised forward put price $\hat{P}$ with respect to the time to maturity $\tau$ is:

\begin{equation*}
    \frac{\partial \hat{P}}{\partial \tau} =  - \frac{1}{\pi} {\displaystyle \int^{+ \infty}_0 Re \left \{ \frac{\partial}{\partial \tau} \left[e^{ (i u + \frac{1}{2}) \hat{F}_{t,T}} \phi_{\tau}(u-\frac{i}{2}) \right] \right\}\frac{\,du}{u^2+\frac{1}{4}}} 
\end{equation*}   

We have:
\begin{equation*}
  \frac{\partial}{\partial \tau} \left[e^{ (i u + \frac{1}{2}) \hat{F}_{t,T}} \phi_{\tau}(u-\frac{i}{2}) \right] =   r e^{ (i u + \frac{1}{2})\hat{F}_{t,T}} \ (i u + \frac{1}{2}) \ \phi_{\tau}(u-\frac{i}{2}) 
    + e^{ (i u + \frac{1}{2})\hat{F}_{t,T}} \ \frac{\partial \phi_{\tau}(u-\frac{i}{2})}{\partial \tau}
\end{equation*}

Then:

 \begin{eqnarray*}
 \frac{\partial \hat{P}}{\partial \tau} & =   - \frac{1}{\pi} \biggl\{ r {\displaystyle \int^{+ \infty}_0 Re \left[e^{ (i u + \frac{1}{2})\hat{F}_{t,T}} (i u + \frac{1}{2}) \  \phi_{\tau}(u-\frac{i}{2}) \right]\frac{\,du}{u^2+\frac{1}{4}}} \\
   & + {\displaystyle \int^{+ \infty}_0 Re \left[e^{ (i u + \frac{1}{2})\hat{F}_{t,T}} \  \frac{\partial \phi_{\tau}(u-\frac{i}{2})}{\partial \tau}\right] \frac{\,du}{u^2+\frac{1}{4}}} \biggl\} \nonumber
    \end{eqnarray*}  

Differentiating the characteristic function (\ref{equation:phi}) with respect to $\tau$ yields:

\begin{equation}
\label{equation:phi_to}
    \frac{\partial \phi_{\tau}(u)}{\partial \tau} = \phi_{\tau}(u) \left[
    \frac{\partial A(u)}{\partial \tau} + v_0 \frac{\partial B(u)}{\partial \tau} \right]
\end{equation}

Long calculations lead to:
\begin{equation}
    \label{equation: diff_A_tau}
    \frac{\partial A(u)}{\partial \tau} = \frac{\kappa \theta}{\sigma^2} \left[\beta - d + 2 \ \frac{d g e^{-d\tau}}{g e^{-d\tau} -1} \right]
\end{equation}

and:

\begin{equation}
    \label{equation: diff_B_tau}
    \frac{\partial B(u)}{\partial \tau} = \frac{\beta - d}{\sigma^2} \ \frac{d e^{-d\tau} (1-g)}{(1-g e^{-d\tau})^2}
\end{equation}

A substitution of (\ref{equation: diff_A_tau}) and (\ref{equation: diff_B_tau}) in (\ref{equation:phi_to}) leads to (\ref{equation:diff_phi_tau}).

%%%%%%%%%%%%%%%%%%%%%%%%%%%%%%%%%%%%%%%%%%%%%%%%%%%%%

%%%%%%%%%%%%%%%%%%%%%%%%%%%%%%%%%%%%%%%%%%%%%%%%%%%%%

\subsection{Proof of Equation (\ref{equation:diff_wrt_kappa})}
\label{section:demo_diff_wrt_kappa}

Calculating the partial derivative of the normalised forward put price $\hat{P}$ with respect to the reversion speed of variance process $\kappa$, we get:
\begin{equation*}
    \frac{\partial \hat{P}}{\partial \kappa} =  - \frac{1}{\pi} {\displaystyle \int^{+ \infty}_0 Re \left[ e^{ (i u + \frac{1}{2}) \hat{F}_{t,T}} \ \frac{\partial \phi_{\tau}(u-\frac{i}{2})}{\partial \kappa}\right]\frac{\,du}{u^2+\frac{1}{4}}}
\end{equation*}

Differentiating the characteristic function (\ref{equation:phi}) with respect to $\kappa$ yields:

\begin{equation*}
    \frac{\partial \phi_{\tau}(u)}{\partial \kappa} = \phi_{\tau}(u) \left[
    \frac{\partial A(u)}{\partial \kappa} + v_0 \frac{\partial B(u)}{\partial \kappa} \right]
\end{equation*}

We have:
\begin{eqnarray}
    \label{equation:diff_A_kappa_initial_form}
    \frac{\partial A(u)}{\partial \kappa} & = \ \frac{\theta}{\sigma^2} \left[\tau (\beta-d)  - 2\ln \left(\frac{g e^{-d\tau} - 1}{g-1}\right) \right] \\
     & + \frac{\kappa \theta}{\sigma^2} \left \{ \tau \ \frac{\partial \beta}{\partial \kappa}  - \tau \ \frac{\partial d}{\partial \kappa} - 2 \ \frac{\partial}{\partial \kappa} \left [\ln \left(\frac{g e^{-d\tau} - 1}{g-1}\right) \right] \right \} \nonumber
\end{eqnarray}

Differentiating $ \ln \left(\frac{g e^{-d\tau} - 1}{g-1}\right)$ with respect to $\kappa$ requires the calculation of the partial derivative of $\beta$, $d$ and $g$ with respect to $\kappa$ that gives:

\begin{equation}
    \label{equation:diff_beta_kappa}
    \frac{\partial \beta}{\partial \kappa} = 1
\end{equation}

\begin{equation}
    \label{equation:diff_d_kappa}
    \frac{\partial d}{\partial \kappa} = \frac{\beta}{d}
\end{equation}

\begin{equation}
    \label{equation:diff_g_kappa}
    \frac{\partial g}{\partial \kappa} = -2 \  \frac{g}{d}
\end{equation}

 yielding after further calculation to:
 \begin{equation}
     \label{equation:diff_ln_kappa}
     \frac{\partial}{\partial \kappa} \left[\ln \left(\frac{g e^{-d\tau} - 1}{g-1}\right) \right] = \frac{2g (g e^{-d\tau}-1) - g (g-1) e^{-d\tau}(2+\tau \beta) }{d (g-1) (g e^{-d\tau}-1)}
 \end{equation}
 
By substituting (\ref{equation:diff_beta_kappa}), (\ref{equation:diff_d_kappa}) and (\ref{equation:diff_ln_kappa}) in (\ref{equation:diff_A_kappa_initial_form}) and after  simplification, we obtain (\ref{equation:diff_A_kappa}). Moreover, the partial derivative of $B$ with respect to $\kappa$ is:

\begin{equation}
\label{diff_B_kappa_initial_form}
    \frac{\partial B(u)}{\partial \kappa} = \frac{1}{\sigma^2} \ \frac{1-e^{-d\tau}}{1-g e^{-d\tau}} \left(\frac{\partial \beta}{\partial \kappa} - \frac{\partial d}{\partial \kappa} \right) + \frac{\beta-d}{\sigma^2} \  \frac{\partial}{\partial \kappa} \left( \frac{1-e^{-d\tau}}{1-g e^{-d\tau}}\right)
\end{equation}

Long calculations lead to:
\begin{equation}
     \label{equation:diff_quot_kappa}
     \frac{\partial}{\partial \kappa} \left( \frac{1-e^{-d\tau}}{1-g e^{-d\tau}}\right) =  \frac{1}{d} \ \left( \frac{\tau \beta e^{-d\tau}}{1- g e^{-d\tau}}  - \frac{(1-e^{-d\tau}) \ g \ e^{-d\tau} \ (2+\tau \beta)}{(1- g e^{-d\tau})^2} \right)
 \end{equation}

Substituting (\ref{equation:diff_beta_kappa}), (\ref{equation:diff_d_kappa}) and (\ref{equation:diff_quot_kappa}) in (\ref{diff_B_kappa_initial_form}) and after simplification, we get (\ref{equation:diff_B_kappa}).

%%%%%%%%%%%%%%%%%%%%%%%%%%%%%%%%%%%%%%%%%%%%%%%%%%%%%

%%%%%%%%%%%%%%%%%%%%%%%%%%%%%%%%%%%%%%%%%%%%%%%%%%%%%

\subsection{Proof of Equation (\ref{equation:diff_wrt_rho})}
\label{section:demo_diff_wrt_rho}

Calculating the partial derivative of the normalised forward put price $\hat{P}$ with respect to the  correlation between the variance and the underlying process $\rho$, we get:

\begin{equation*}
    \frac{\partial \hat{P}}{\partial \rho} =  - \frac{1}{\pi} {\displaystyle \int^{+ \infty}_0 Re \left[ e^{ (i u + \frac{1}{2}) \hat{F}_{t,T}} \ \frac{\partial \phi_{\tau}(u-\frac{i}{2})}{\partial \rho}\right]\frac{\,du}{u^2+\frac{1}{4}}}
\end{equation*}

Differentiating the characteristic function (\ref{equation:phi}) with respect to $\rho$ yields:

\begin{equation*}
    \frac{\partial \phi_{\tau}(u)}{\partial \rho} = \phi_{\tau}(u) \left[
    \frac{\partial A(u)}{\partial \rho} + v_0 \frac{\partial B(u)}{\partial \rho} \right]
\end{equation*}

We have:
\begin{equation}
    \label{equation:diff_A_rho_initial_form}
    \frac{\partial A(u)}{\partial \rho}  =  \frac{\kappa \theta}{\sigma^2} \left \{\tau \left (\frac{\partial \beta}{\partial \rho} - \frac{\partial d}{\partial \rho} \right)
    - 2 \ \frac{\partial}{\partial \rho} \left [\ln \left(\frac{g e^{-d\tau} - 1}{g-1}\right) \right] \right \}
\end{equation}

Differentiating $ \ln \left(\frac{g e^{-d\tau} - 1}{g-1}\right)$ with respect to $\rho$ requires the calculation of the partial derivative of $\beta$, $d$ and $g$ with respect to $\rho$ that gives:

\begin{equation}
    \label{equation:diff_beta_rho}
    \frac{\partial \beta}{\partial \rho} = - i\ u \ \sigma
\end{equation}

\begin{equation}
    \label{equation:diff_d_rho}
    \frac{\partial d}{\partial \rho} = \frac{- i u \sigma \beta}{d}
\end{equation}

\begin{equation}
    \label{equation:diff_g_rho}
    \frac{\partial g}{\partial \rho} = \frac{2 i u \sigma g}{d}
\end{equation}

 yielding after further calculation to:
 \begin{equation}
     \label{equation:diff_ln_rho}
     \frac{\partial}{\partial \rho} \left[\ln \left(\frac{g e^{-d\tau} - 1}{g-1}\right) \right] = \ \frac{iu\sigma g}{d} \ \left[ \frac{e^{-d\tau} (2 + \tau \beta)} {g e^{-d\tau}-1} - \frac{2}{g-1} \right]
 \end{equation}
 
By substituting (\ref{equation:diff_beta_rho}), (\ref{equation:diff_d_rho}) and (\ref{equation:diff_ln_rho}) in (\ref{equation:diff_A_rho_initial_form}) and after  simplification, we obtain (\ref{equation:diff_A_rho}). Moreover, the partial derivative of $B$ with respect to $\rho$ is:

\begin{equation}
\label{equation:diff_B_rho_initial_form}
    \frac{\partial B(u)}{\partial \rho} = \frac{1}{\sigma^2} \left(\frac{\partial \beta}{\partial \rho} - \frac{\partial d}{\partial \rho} \right) \left( \frac{1-e^{-d\tau}}{1-g e^{-d\tau}}\right) + \frac{\beta - d}{{\sigma}^2} \ \frac{\partial}{\partial \rho} \left( \frac{1-e^{-d\tau}}{1-g e^{-d\tau}}\right)
\end{equation}

Long calculations lead to:
\begin{equation}
     \label{equation:diff_quot_rho}
     \frac{\partial}{\partial \rho} \left( \frac{1-e^{-d\tau}}{1-g e^{-d\tau}}\right) =  \frac{iu\sigma e^{-d\tau}}{d \ (1-g e^{-d\tau})^2}  \left[ -\tau \beta(1-g e^{-d\tau}) + g(2+\tau \beta) (1-e^{-d\tau}) \right]
 \end{equation}

Substituting (\ref{equation:diff_beta_rho}), (\ref{equation:diff_d_rho}) and (\ref{equation:diff_quot_rho}) in (\ref{equation:diff_B_rho_initial_form}) and after simplification, we get (\ref{equation:diff_B_rho}).

%%%%%%%%%%%%%%%%%%%%%%%%%%%%%%%%%%%%%%%%%%%%%%%%%%%%%
%%%%%%%%%%%%%%%%%%%%%%%%%%%%%%%%%%%%%%%%%%%%%%%%%%%%%

\subsection{Proof of Equation (\ref{equation:diff_wrt_sigma})}
\label{section:demo_diff_wrt_sigma}

Calculating the partial derivative of the normalised forward put price $\hat{P}$ with respect to the volatility of the variance $\sigma$, we get:

\begin{equation*}
    \frac{\partial \hat{P}}{\partial \sigma} =  - \frac{1}{\pi} {\displaystyle \int^{+ \infty}_0 Re \left[ e^{ (i u + \frac{1}{2}) \hat{F}_{t,T}} \ \frac{\partial \phi_{\tau}(u-\frac{i}{2})}{\partial \sigma}\right]\frac{\,du}{u^2+\frac{1}{4}}}
\end{equation*}

Differentiating the characteristic function (\ref{equation:phi}) with respect to $\sigma$ yields:

\begin{equation*}
     \frac{\partial \phi_{\tau}(u)}{\partial \sigma} = \phi_{\tau}(u) \left[
    \frac{\partial A(u)}{\partial \sigma} + v_0 \frac{\partial B(u)}{\partial \sigma} \right]
\end{equation*}

We have:
\begin{equation}
\label{equation:diff_A_sigma_initial_form}
\frac{\partial A(u)}{\partial \sigma} = \frac{-2\ A(u)}{\sigma} + \frac{\kappa \theta}{\sigma^2}\left \{\tau \left(\frac{\partial \beta}{\partial \sigma} - \frac{\partial d}{\partial \sigma} \right) - 2 \ \frac{\partial}{\partial \sigma}\left[ \ln \left (\frac{g e^{-d\tau} -1} {g-1} \right) \right] \right \}
\end{equation}

Differentiating $ \ln \left(\frac{g e^{-d\tau} - 1}{g-1}\right)$ with respect to $\sigma$ requires the calculation of the partial derivative of $\beta$, $d$ and $g$ with respect to $\sigma$ that gives:

\begin{equation}
    \label{equation:diff_beta_sigma}
    \frac{\partial \beta}{\partial \sigma} = - i\ u \ \rho
\end{equation}

\begin{equation}
    \label{equation:diff_d_sigma}
    \frac{\partial d}{\partial \sigma} = \frac{- 1}{d} \left[ i u \rho \beta + 2 \hat{\alpha} \sigma\right]
\end{equation}

\begin{equation}
    \label{equation:diff_g_sigma}
    \frac{\partial g}{\partial \sigma} = \frac{2 i u \rho \ (\beta^2 - d^2) + 4 \beta \hat{\alpha} \sigma}{d \ (\beta + d)^2}
\end{equation}

yielding after further calculation to (\ref{equation:diff_ln_sigma}). Then substituting (\ref{equation:diff_beta_sigma}) and (\ref{equation:diff_d_sigma}) in (\ref{equation:diff_A_sigma_initial_form}) gives (\ref{equation:diff_A_sigma}). Moreover, the partial derivative of $B$ with respect to $\sigma$ is given by (\ref{equation:diff_B_sigma}) and we have:

\begin{equation}
    \label{equation:beta_d_initial_form}
    \frac{\partial}{\partial \sigma} \left(\frac{\beta - d}{\sigma^2} \right) = \frac{\left (\frac{\partial \beta}{\partial \sigma} - \frac{\partial d}{\partial \sigma} \right) \sigma^2 - 2 \sigma (\beta - d)}{\sigma^4}
\end{equation}

By substituting (\ref{equation:diff_beta_sigma}) and (\ref{equation:diff_d_sigma}) in (\ref{equation:beta_d_initial_form}), we obtain (\ref{equation:diff_beta_d_sigma}).  Finally, we have:

\begin{equation}
\label{equation:diff_int_sigma_initial_form}
     \frac{\partial}{\partial \sigma} \left(\frac{1 - e^{-d\tau} }{1 - g e^{-d\tau}} \right) = 
     \frac{\frac{\partial}{\partial \sigma} (1 -  e^{-d\tau}) (1 - g e^{-d\tau}) - (1 - e^{-d\tau}) \frac{\partial}{\partial \sigma} (1 - g e^{-d\tau})}{\left(1 - g e^{-d\tau} \right)^2}
 \end{equation}
that leads after long calculations to (\ref{equation:diff_int_sigma}).

%%%%%%%%%%%%%%%%%%%%%%%%%%%%%%%%%%%%%%%%%%%%%%%%%%%
%%%%%%%%%%%%%%%%%%%%%%%%%%%%%%%%%%%%%%%%%%%%%%%

%\section{Second}

%% bibliography
\bibliographystyle{plain}
%\bibliography{References}

%\printbibliography

% For further references see \href{http://www.overleaf.com}{Something 
% Linky} or go to the next url: \url{http://www.overleaf.com} or open 
% the next file \href{run:./file.txt}{File.txt}

% It's also possible to link directly any word or 
% \hyperlink{thesentence}{any sentence} in your document.

% \begin{figure}[tb]
% \centering
% \includegraphics[width = 0.4\hsize]{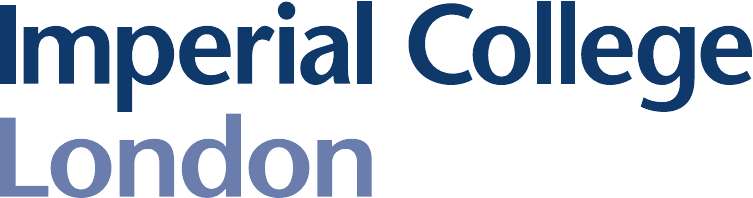}
% \caption{Imperial College Logo. It's nice blue, and the font is quite stylish. But you can choose a different one if you don't like it.}
% \label{fig:logo}
% \end{figure}

% Figure~\ref{fig:logo} is an example of a figure. 
% Introduce the project and the problem. Provide some context for a reader who is not an expert in this area.

\end{document}